\documentclass[12pt]{article}

\usepackage{amssymb,amsmath,amsfonts,amsthm}
\usepackage{graphicx}
\usepackage{bbm}

\usepackage{enumerate}

\newtheorem{proposition}{Proposition}
\newtheorem{lemma}{Lemma}

\newtheorem{corollary}[proposition]{Corollary}
\newtheorem{conjecture}{Conjecture}

\renewcommand{\c}[1]{\mathcal{#1}}
\newcommand{\g}[1]{\mathfrak{#1}}
\renewcommand{\r}[1]{\mathrm{#1}}
\newcommand{\s}[1]{\mathsf{#1}}

\newcommand{\Cx}{\mathbbm{C}}

\newcommand{\Rl}{\mathbbm{R}}

\newcommand{\idty}{\mathbbm{1}}

\DeclareMathOperator{\ran}{ran}

\DeclareMathOperator{\spa}{span}
\DeclareMathOperator*{\tr}{Tr}

\newcommand{\<}{\langle}
\renewcommand{\>}{\rangle}
\providecommand{\abs}[1]{|#1|}
\providecommand{\norm}[1]{\Vert #1 \Vert}

\newcommand{\proj}[1]{\ensuremath{|#1\rangle \langle #1|}}
\newcommand{\beq}{\begin{equation}}
\newcommand{\eeq}{\end{equation}}
\newcommand{\ket}[1]{\ensuremath{\left|{#1}\right\rangle}}
\newcommand{\bra}[1]{\ensuremath{\left\langle{#1}\right |}}
\renewcommand{\rho}{\varrho}

\begin{document}

\begin{center}
{\LARGE Matrices of fidelities for ensembles of quantum states and the Holevo quantity} \\[12pt]
Mark~Fannes$^1$, Fernando~de~Melo$^1$, Wojciech~Roga$^2$, and Karol~{\.Z}yczkowski$^{2,3}$
\end{center}

\medskip
\noindent
$^1$ Instituut voor Theoretische Fysica,
Universiteit Leuven, B-3001 Leuven, Belgium \\
$^2$ Instytut Fizyki im.~Smoluchowskiego,
Uniwersytet Jagiello{\'n}ski,
PL-30-059 Krak{\'o}w, Poland \\
$^3$Centrum Fizyki Teoretycznej, Polska Akademia Nauk,
PL-02-668 Warszawa, Poland

\medskip
\noindent
Email: \texttt{<mark.fannes@fys.kuleuven.be>}, \texttt{<fmelo@itf.fys.kuleuven.be>},\\
\texttt{<wojciech.roga@uj.edu.pl>}, and \texttt{<karol@tatry.if.uj.edu.pl>} 

\medskip\noindent
April 12, 2011

\bigskip
\noindent
\textbf{Abstract:}
The entropy of the Gram matrix of a joint purification of an ensemble of
$K$ mixed states yields an upper bound for the Holevo information
$\chi$ of the ensemble.
In this work we combine geometrical and probabilistic aspects of the ensemble in order to obtain useful bounds for $\chi$. 
This is done by constructing various correlation matrices involving fidelities between every pair of states from the ensemble.
For $K=3$ quantum states we design  a matrix of root fidelities that is positive and the entropy of which is conjectured to upper bound $\chi$.
Slightly weaker bounds are established for arbitrary ensembles.
Finally, we investigate correlation matrices involving multi-state
fidelities in relation to the Holevo quantity.

\bigskip
\noindent
{PACS: 
02.10.Ud (Mathematical methods in physics, Linear algebra), 
03.67.-a	Quantum information
03.67.Hk	Quantum communication
03.65.Yz (Decoherence; open systems; quantum statistical methods)}

\section{Introduction} 
\label{intro} 

Quantum mechanics is a probabilistic theory described by vectors living in a complex inner product space. Due to that, ensembles of states and the linear dependencies among them are fundamental concepts of the theory. Consider an ensemble $\c E_K = \{(p_i,\ket{\varphi_i})\}$ formed by $K$ pure states, with $\ket{\varphi_i} \in \Cx^d$, $0\le p_i \le 1$, and $\sum_i^K p_i = 1$. While the statistical features of $\c E_K$ are fully described by the state $\rho = \sum_{i=1}^K p_i \proj{\varphi_i}$, all the relationships among the states are encoded in the \emph{correlation matrix} $\s C(\c E_K) := \bigl[ \sqrt{p_i p_j} \bra{\varphi_i} \varphi_j\> \bigr]_{ij}$. This correlation matrix is nothing but the Gram matrix of the sub-normalized vectors $\{\sqrt{p_i} \ket{\varphi_i}\}$. As such, $\s C(\c E_K)$ is always positive semi-definite and completely defines the space spanned by $\c E_K$.

The connection between ensembles and correlation matrices can promptly be extended to mixed states via their purifications. Let $\rho \in \c M^{d_1}$ be a state acting on a Hilbert space $\c H_1$ and let $d_2$ be the rank of $\rho$. Then, there exists a vector $\ket{\varphi}\in \Cx^{d_1\times d_2}$ acting, by virtue of the identification $\ket{\varphi} \mapsto \proj{\varphi}$, on $\c H_1 \otimes \c H_2$ such that $\rho = \tr_2 \proj{\varphi}$. The correlation matrix with elements $\sqrt{p_i p_j} \bra{\varphi_i} \varphi_j\>$ can then be constructed. Purifications are, however, not unique. If $\ket{\varphi}$  is a valid purification of $\rho$, so is $\idty \otimes U \ket{\varphi}$, with $U$ a unitary matrix. In this way, the description of an ensemble must be augmented with the choice of purifications, thus $\c E_K = \{(p_i, \rho_i, U_i)\}$.  To make explicit the dependence of $\s C(\c E_K)$ on the unitary matrices $U_i$, one simply notes that all the purifications of $\rho$ can be written as $\ket{\varphi} =  \bigl( \rho^{1/2} \otimes U^{\s T} \bigr) \sum_i \ket{ii}$, with $\s T$ the transposition taken on the basis $\{\ket{j}\}$. Using this, it is easy to obtain the general form of a correlation matrix
\beq
\s C(\c E_K) := \bigl[ \tr\rho_j^{1/2} U_j^* U_i \rho_i^{1/2} \bigr]_{ij}.
\label{cor}
\eeq

It comes as no surprise that the geometrical aspects of ensembles, as depicted by their correlation matrix, play an important role in the encoding of information into quantum systems~\cite{JS00,MJ04, RFZ10}. Consider for instance a typical scenario of quantum information communication: a sender $A$ holds an ensemble of $K$ quantum states $\c E_K =\{(p_j, \rho_i, U_i)\}$. Given an event $i$, with probability $p_i$, $A$ sends the state $\rho_i$ to a receiver $B$. The latter then performs a measurement on the delivered state aiming to distinguish it from the other states in the ensemble, and thus  to recover the index $i$. For quantum states are not necessarily orthogonal, $B$  won't generically fully succeed in this task. Holevo~\cite{holevo} showed that the information accessible to $B$ given the ensemble $\c E_K$ is upper bounded by
\beq
\chi(\c E_K) = \s S(\rho) - \sum_{i=1}^K p_i \s S(\rho_i);
\label{chi}
\eeq
with $\s S: \c M^d \to \Rl^+ : \sigma \mapsto - \tr(\sigma \log \sigma)$ the von Neumann entropy, and $\rho = \sum_{j=1}^K p_j \rho_j$. Accordingly, $\chi$ is  called the \emph{Holevo quantity}. Since the linear dependence between the states plays an important role for the information retrieval by $B$, it is expected that the geometry induced by $\c E_K$ should feature in the estimation of the accessible information. The Holevo quantity does not explicitly acknowledge that. Recently this gap has been closed and $\chi(\c E_K)$  was shown to be upper bounded by the entropy of the correlation matrix $\s C(\c E_K)$, see~\cite{RFZ10}. In particular, a tighter bound is obtained asking for the purifications that minimize the entropy of the correlation matrix, that is:
\beq
\chi(\c E_K) \le \min_{U_1,\ldots ,U_K} \s S\bigl(\bigl[ \tr \rho_j^{1/2} U_j^* U_i \rho_i^{1/2}\bigr]_{ij} \bigr).
\label{minSC}
\eeq

Nevertheless, the correlation matrix $\s C(\c E_K)$ works at the level of quantum amplitudes $\bra{\varphi_i}\varphi_j\>$. It is thus highly desirable to join the geometrical description given by the correlation matrix with the intrinsic probabilistic aspects of quantum theory. 

In fact, consider the case of $K \!= \!2$ states. The minimization in Eq.~\eqref{minSC} for an ensemble $\c E_2$ of two states can be exactly performed and the minimal entropy is obtained for the correlation matrix
\beq
\begin{bmatrix}
p_1& \sqrt{p_1 p_2} \sqrt{\s F(\rho_1;\rho_2)} \\
\sqrt{p_1 p_2} \sqrt{\s F(\rho_1;\rho_2)}& p_2
\end{bmatrix}
\eeq
where $\s F(\rho_1;\rho_2) := \bigl( \tr |\sqrt{\rho_1} \sqrt{\rho_2} | \bigr)^2$ is the \emph{fidelity} between the two states~\cite{Uh76,Uh92,Jo94}. We therefore obtain the upper bound
\beq
\chi(\c E_2) \le \s S\left( \begin{bmatrix}
p_1& \sqrt{p_1 p_2} \sqrt{\s F(\rho_1;\rho_2)} \\
\sqrt{p_1 p_2} \sqrt{\s F(\rho_1;\rho_2)}& p_2
\end{bmatrix} \right).
\label{2bound}
\eeq
The fidelity can be expressed as the maximum \emph{transition probability} between joint purifications of the two states, i.e., 
\beq
\s F(\rho_1;\rho_2) = \max |\!\bra{\varphi_1}\varphi_2\>|^2 \quad\text{with}\quad \rho_i = {\tr}_2 \proj{\varphi_i}.
\eeq 
Here, the connection between geometrical and probabilistic aspects is immediate. This is, however, only possible due to the small size of the ensemble.

In this paper, we investigate different correlation matrices based on fidelities which give upper bounds to the Holevo quantity. Surprisingly, we show that for the case $K=3$ the correlation matrix based on the square root fidelities is still positive, and numerically support that it bounds the Holevo quantity (Section~\ref{Ke3}). For  ensembles of an arbitrary number of states we construct distinct correlation matrices leading to slightly weaker bounds to $\chi$ (Section~\ref{Karb}). Among these constructions, we introduce a correlation matrix that takes into account the relationships among several states (Section~\ref{multi}). We start, however, by fixing notation and recalling some basic properties of the fidelity measure.

\section{Fidelity}
\label{fidel} 

Uhlmann extended the transition probability between pure states to fidelity between general mixed states using joint purifications, i.e., by introducing an ancillary system such that the mixed states are marginals of pure states on the composite system, see~\cite{Uh76}. As purification is not unique one has to maximize the fidelity of these pure states over all joint purifications. This leads to
\begin{align}
\s F\bigl( \rho_1; \rho_2 \bigr) 
&= \max_{\{U \mid \text{ unitary}\}}\ \bigl(\bigl| \tr \sqrt{\rho_1} \sqrt{\rho_2} U \bigr|\bigr)^2 \\
&= \bigl( \tr \bigl| \sqrt{\rho_1} \sqrt{\rho_2}\, \bigr|\bigr)^2 = \big(\tr \sqrt{ \sqrt{\rho_1} \rho_2 \sqrt{\rho_1}}\bigr)^2 = \bigl( \tr \sqrt{ \rho_1 \rho_2} \bigr)^2, 
\label{3}
\end{align}
see Lemma~\ref{lem1} for the last equality. Fidelity can be extended to general quantum probability spaces~\cite{Uh76}. Note also that in some papers the root fidelity, $\sqrt{\s F}$, is referred to as `fidelity'.

Fidelity enjoys a number of basic properties, see e.g.~\cite{Jo94}. It takes values in $[0,1]$, $ \s F\bigl( \rho_1; \rho_2 \bigr) = 1$ if and only if $\rho_1 = \rho_2$, and $\s F\bigl( \rho_1; \rho_2 \bigr) = \s F\bigl( \rho_2; \rho_1 \bigr)$. The notion of closeness based on fidelity is the same as that of the usual Kolmogorov distance
\begin{equation*}
1 - \sqrt{\s F \bigl( \rho_1; \rho_2 \bigr)} \le \norm{\rho_1 - \rho_2}_1 := \tfrac{1}{2}\, \tr \abs{\rho_1 - \rho_2} \le \sqrt{1 - \sqrt{ \s F \bigl( \rho_1; \rho_2 \bigr)}}.
\end{equation*}
Fidelity is monotonic under a quantum operation $\Gamma$ (a completely positive trace-preserving map): 
\begin{equation*}
\s F \bigl( \Gamma(\rho_1); \Gamma(\rho_2) \bigr) \ge \s F \bigl( \rho_1; \rho_2 \bigr).
\end{equation*}
Furthermore, fidelity is concave in each of its arguments
\begin{equation*}
\s F\bigl( \tfrac{1}{2}\, \rho_1 + \tfrac{1}{2}\, \rho_2; \rho_3 \bigr) \ge \tfrac{1}{2}\, \s F \bigl( \rho_1; \rho_3 \bigr) + \tfrac{1}{2}\, \s F\bigl( \rho_2; \rho_3 \bigr)
\end{equation*}
and the square root of $\s F$ is also jointly concave,
\begin{equation*}
\sqrt{ \s F\bigl( \tfrac{1}{2}\, \rho_1 + \tfrac{1}{2}\, \rho_2; \tfrac{1}{2}\, \rho_3 + \tfrac{1}{2}\, \rho_4 \bigr)} \ge \tfrac{1}{2}\, \sqrt{ \s F\bigl( \rho_1; \rho_3 \bigr)} + \tfrac{1}{2}\, \sqrt{ \s F\bigl( \rho_2; \rho_4 \bigr)}.
\end{equation*}

We end this section by proving the last expression for the fidelity in~(\ref{3}).
   
\begin{lemma}
\label{lem1}
Let $A$ and $B$ be positive semi-definite matrices then $AB$ is diagonalizable and the eigenvalues of $AB$ belong to $\Rl^+$. This implies that $\sqrt{AB}$ is uniquely defined.
\end{lemma}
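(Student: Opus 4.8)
The plan is to exploit a similarity transformation that symmetrizes the product $AB$. The key observation is that although $AB$ need not be self-adjoint, it is similar to a self-adjoint matrix. First I would treat the generic case where $A$ is strictly positive definite, hence invertible with a well-defined positive square root $A^{1/2}$. Then I would write
\begin{equation*}
A^{-1/2}(AB)A^{1/2} = A^{1/2} B A^{1/2}.
\end{equation*}
The right-hand side is manifestly self-adjoint, being of the form $C^* B C$ with $C = A^{1/2}$ self-adjoint, and it is positive semi-definite because $B$ is. Since similar matrices share the same spectrum, the eigenvalues of $AB$ coincide with those of the positive semi-definite matrix $A^{1/2} B A^{1/2}$, so they are real and nonnegative. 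Moreover $A^{1/2} B A^{1/2}$ is self-adjoint and therefore diagonalizable, and diagonalizability is preserved under similarity, so $AB$ is diagonalizable as well.

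To remove the invertibility assumption on $A$, I would use a limiting or perturbation argument: replace $A$ by $A + \epsilon \idty$ for $\epsilon > 0$, which is strictly positive, apply the above to conclude the eigenvalues of $(A+\epsilon\idty)B$ are real and nonnegative, and let $\epsilon \to 0$. Because the eigenvalues of a matrix depend continuously on its entries (as roots of the characteristic polynomial), the eigenvalues of $AB$ are limits of nonnegative reals and hence nonnegative. Diagonalizability in the degenerate case requires slightly more care, since it is not a closed condition under limits, so I would instead argue directly that $AB$ is diagonalizable even when $A$ is singular.

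For the diagonalizability in full generality, I would split the space as $\ran A \oplus \ker A$ and observe that $AB$ annihilates nothing problematic: on a complement one reduces to the invertible case. Concretely, one can conjugate $AB$ into block form and reduce the analysis to the restriction of $A$ to its range, where $A$ is invertible, so the earlier similarity argument applies to the nondegenerate block while the kernel contributes only the eigenvalue zero with matching algebraic and geometric multiplicities. I expect this handling of the singular case to be the main obstacle, since the clean similarity trick works verbatim only when $A^{-1/2}$ exists; the bookkeeping needed to show the zero eigenspace does not spoil diagonalizability is the one genuinely fiddly point.

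Finally, the uniqueness of $\sqrt{AB}$ follows immediately once the spectrum is established. Since $AB$ is diagonalizable with eigenvalues in $\Rl^+$, its spectral decomposition $AB = \sum_k \lambda_k P_k$ with $\lambda_k \ge 0$ lets one define $\sqrt{AB} = \sum_k \sqrt{\lambda_k}\, P_k$ unambiguously via the nonnegative real square roots of the eigenvalues, giving the uniquely determined square root used in the fidelity formula~\eqref{3}.
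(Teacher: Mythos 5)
Your invertible-case argument is correct, and so is the continuity argument for nonnegativity of the spectrum; note that conjugating $AB$ into $A^{1/2}BA^{1/2}$ is essentially the paper's opening move, except that the paper uses the fact that $XY$ and $YX$ have the same nonzero spectrum (applied to $X=\sqrt A$, $Y=\sqrt A B$), which works even for singular $A$ and is what lets it avoid your case split for the eigenvalue claim. The genuine gap is exactly the point you flag and then wave off as bookkeeping: diagonalizability when $A$ is singular. Splitting the space as $\ran A \oplus \ker A$ gives
\begin{equation*}
A = \begin{bmatrix} A_1 & 0 \\ 0 & 0 \end{bmatrix}, \qquad
B = \begin{bmatrix} B_{11} & B_{12} \\ B_{12}^* & B_{22} \end{bmatrix}, \qquad
AB = \begin{bmatrix} A_1 B_{11} & A_1 B_{12} \\ 0 & 0 \end{bmatrix},
\end{equation*}
which is block upper \emph{triangular}, not block diagonal: the coupling block $A_1B_{12}$ need not vanish, so the analysis does not reduce to the restriction of $A$ to its range. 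A block triangular matrix whose diagonal blocks are diagonalizable can perfectly well carry a nontrivial Jordan block at $0$ --- $\left[\begin{smallmatrix} 0 & 1 \\ 0 & 0 \end{smallmatrix}\right]$ is the standard example --- so the assertion that ``the kernel contributes only the eigenvalue zero with matching algebraic and geometric multiplicities'' is not bookkeeping; it is the entire content of the lemma at $\lambda = 0$, and it requires invoking positivity of $B$ a second time, which your sketch never does.

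What closes the hole (and is the heart of the paper's proof) is a quadratic-form argument. If $(AB)^2\varphi = 0$, one first shows $BAB\varphi = 0$; in your block picture, writing $AB\varphi = (\psi,0)$, one has $A_1B_{11}\psi = 0$, hence $B_{11}\psi = 0$ since $A_1$ is invertible, and positivity of $B$ forces $\ker B_{11} \subseteq \ker B_{12}^*$, so $B(AB\varphi) = 0$. Then
\begin{equation*}
0 = \langle \varphi, BAB\varphi\rangle = \bigl\Vert \sqrt A\, B\varphi \bigr\Vert^2
\quad\Longrightarrow\quad
\sqrt A\, B \varphi = 0
\quad\Longrightarrow\quad
AB\varphi = \sqrt A\bigl(\sqrt A B\varphi\bigr) = 0,
\end{equation*}
so $\ker\bigl((AB)^2\bigr) = \ker(AB)$ and the zero eigenvalue has trivial Jordan structure. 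The paper runs this kernel argument uniformly in $\lambda$ (applying $\sqrt B$ and using self-adjointness of $\sqrt B A \sqrt B$ to pass from $(\,\cdot\, - \lambda\idty)^2$ to $(\,\cdot\, - \lambda\idty)$), with the display above as the $\lambda = 0$ case, thereby avoiding any invertibility assumption. Be aware that the singular case is not a corner case here: the lemma is applied in~\eqref{3} to density matrices, which are generically rank deficient (pure states have rank one), so without this step your proposal does not prove what the paper needs. Your final paragraph on uniqueness of $\sqrt{AB}$ is fine once diagonalizability and nonnegativity of the spectrum are in hand.
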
   

\begin{proof}
For any pair of matrices the eigenvalues of $XY$ coincide with those of $YX$ up to multiplicities of zero. Writing $AB = \sqrt A \bigl( \sqrt A B \bigr)$ and observing that $\sqrt A B \sqrt A$ is positive semi-definite we see that the eigenvalues of $AB$ belong to $\Rl^+$. 

Next we show that for any eigenvalue $\lambda$ of $AB$ 
\begin{equation*}
\ker\bigl( (AB - \lambda \idty)^2 \bigr) = \ker(AB - \lambda \idty).
\end{equation*}
Therefore, $AB$ has only trivial Jordan blocks and is diagonalizable. Suppose that $(AB - \lambda \idty)^2 \varphi = 0$. Applying $\sqrt B$ to that expression we obtain $(\sqrt B A \sqrt B - \lambda \idty)^2 \sqrt B \varphi = 0$. Therefore $(\sqrt B A \sqrt B - \lambda \idty) \sqrt B \varphi = 0$. Applying once more $\sqrt B$ to this expression we have $BAB \varphi = \lambda B \varphi$. We now substitute this in $(AB - \lambda \idty)^2 \varphi = 0$ to conclude that $\lambda AB \varphi = \lambda^2 \varphi$. Clearly, if $\lambda > 0$ we are done. So the case $\lambda = 0$ remains. Now, from $BAB\varphi = 0$ we obtain $\sqrt A B \varphi = 0$ and so $AB \varphi = 0$.
\end{proof}

We recall the general characterization of positivity for 2 by 2 block matrices 
\beq
\begin{bmatrix} x &z \\z^* &y \end{bmatrix} \ge 0 \text{ iff } x,y \ge 0, \text{ and } z = \sqrt x\, u \sqrt y \text{ with } \norm u \le 1.
\label{+def}
\eeq 

Let $b>0$, $a,c \ge 0$, and $\norm u \le 1$ and put
\begin{equation*}
A = \begin{bmatrix} a &\sqrt a u \sqrt c \\ \sqrt c u^* \sqrt a &c \end{bmatrix} \quad\text{and}\quad B = \begin{bmatrix} b &0 \\ 0 &0 \end{bmatrix},
\end{equation*}
then
\begin{equation*}
\sqrt{AB} = \begin{bmatrix} x &0 \\y &0 \end{bmatrix}
\end{equation*}
where
\begin{equation*}
\begin{split}
&x = \sqrt{ab} = \frac{1}{\sqrt b}\, \sqrt{\sqrt b a \sqrt b}\, \sqrt b \quad\text{and} \\
&y = 0 \text{ on } \ker(a) \quad\text{and}\quad y \sqrt{ab} = \sqrt c u^* \sqrt a b.
\end{split}
\end{equation*}
The last equation defines $y$ uniquely on $\ran(\sqrt{ab}) = \ran(a)$ because $\sqrt{ab} \varphi = 0$ implies $\sqrt ab\varphi = 0$.

\section{Ensembles of three states}
\label{Ke3} 

Following the case of 2 states, explained in Section~\ref{intro}, one might expect that for an ensemble of 3 states $\c E_3 = \{(p_1,\rho_1,U_1), (p_2,\rho_2,U_2), (p_3,\rho_3,U_3)\}$ the minimum of $\s S\bigl( \s C(\c E_3) \bigr)$ would be reached for a matrix of the form
\beq
\s C_{\sqrt{\s F}}(\c E_3) = \begin{bmatrix}
p_1 &\sqrt{p_1 p_2} \sqrt{\s F_{12}} &\sqrt{p_1 p_3} \sqrt{\s F_{13}} \\
\sqrt{p_1 p_2} \sqrt{\s F_{12}} &p_2 &\sqrt{p_2 p_3} \sqrt{\s F_{23}} \\
\sqrt{p_1 p_3} \sqrt{\s F_{13}} &\sqrt{p_2 p_3} \sqrt{\s F_{23}} &p_3
\end{bmatrix}.
\label{fid3}
\eeq
Here we introduced the simplified notation $\s F_{ij} = \s F\bigl(\rho_i; \rho_j\bigr)$. Such a matrix would have the largest possible off-diagonal elements and would therefore be rather pure and hence have a low entropy. Unfortunately there are constraints between the unitaries appearing in a correlation matrix and these are in general simply too strong to turn~(\ref{fid3}) 
into a correlation matrix. Indeed, numerical minimization over the unitary matrices suggests that for some instances $\s S\bigl( \s C_{\sqrt{\s F}}(\c E_3) \bigr) < \min_{U_1,U_2,U_3} \s S(\s C(\c E_3))$, see Figure~\ref{fig:violation}. Therefore not every root fidelity matrix (\ref{fid3}) represents a correlation matrix (\ref{cor}).

Note that for an ensemble $\c E_K = \{(p_i,\rho_i)\}$ of $K$ states the matrix 
\begin{equation}
\s E_K(\{\rho_i\}) = \bigl[ \sqrt{\s F_{ij}} \bigr]_{ij}
\label{fidmat}
\end{equation}
depending only on the $K$ states $\{\rho_i\}$ has the same number of positive, negative, and zero eigenvalues as the correlation matrix $\s C_{\sqrt{\s{F}}}(\c E_K)$. In fact, both matrices are connected via the $*$-congruence $\s D(\c E_K) = \bigl[ \delta_{ij}/\sqrt{p_i} \bigr]_{ij}$, i.e., $\s E_K(\{\rho_i\}) = \s D(\c E_K)\, \s C_{\sqrt{\s F}}(\c E_K)\, \s D(\c E_K)^*$ and inertia applies. For $K=3$, even the positivity of the matrix $\s C_{\sqrt{\s F}}(\c E_3)$ or equivalently of $\s E_3(\{\rho_j\})$ is not obvious at first sight. Remarkably, positivity still holds for $K=3$ states, but fails in general for larger sets. This is the content of the following proposition, obtained in collaboration with D.~Vanpeteghem~\cite{FV04}.

\begin{figure}[h!tf]
\begin{center}
\scalebox{.85}{\includegraphics{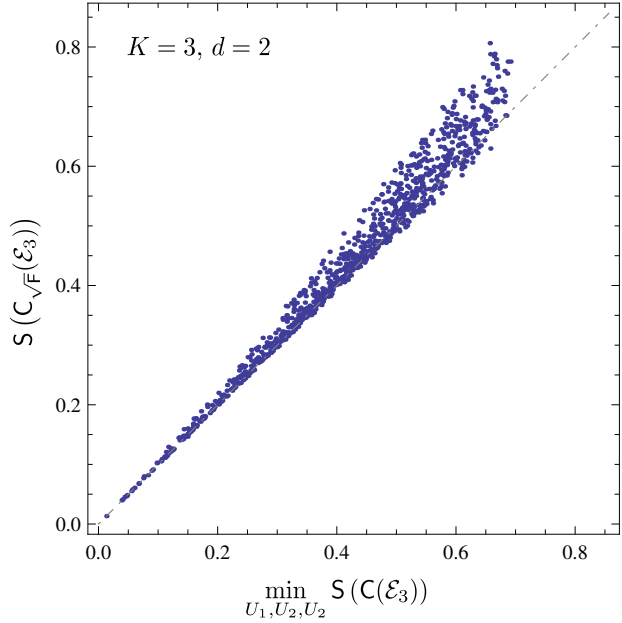}}
\end{center}
\caption{Entropy of the matrix (\ref{fid3}) of square root fidelities can be smaller than the minimal entropy of a correlation matrix (\ref{minSC})}
\label{fig:violation}
\end{figure}

\begin{proposition}
\label{thm1}
For any set $\{\rho_1, \rho_2, \rho_3\}$ with $K \!= \!3$ states the matrix
\beq
\s E_3(\{\rho_j\}) = \bigl[ \sqrt{\s F_{ij}} \bigr]_{ij},
\label{sqrf}
\eeq
is positive semi-definite. For $K \ge 4$,  $\s E_K(\{\rho_j\})$ is not positive in general.
\end{proposition}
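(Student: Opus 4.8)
The plan is to regard $\s E_3$ as a real symmetric matrix with unit diagonal whose off-diagonal entries $x=\sqrt{\s F_{12}}$, $y=\sqrt{\s F_{13}}$, $z=\sqrt{\s F_{23}}$ lie in $[0,1]$, and to reduce positivity to a single determinantal inequality. Since each $\sqrt{\s F_{ij}}\le 1$, the $1\times1$ and $2\times2$ principal minors ($1$ and $1-\s F_{ij}$) are automatically non-negative; by the principal-minor criterion for positive semi-definiteness, $\s E_3\ge 0$ is therefore equivalent to $\det\s E_3\ge 0$. The whole problem thus collapses to one inequality, which I will interpret geometrically through the Bures angle $\theta_{ij}:=\arccos\sqrt{\s F_{ij}}\in[0,\tfrac{\pi}{2}]$.

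Next I would compute $\det\s E_3=1-x^2-y^2-z^2+2xyz$ and read it as a downward-opening parabola in $z$ with roots $z=xy\pm\sqrt{(1-x^2)(1-y^2)}$. Writing $x=\cos\theta_{12}$, $y=\cos\theta_{13}$ and using $\sqrt{1-x^2}=\sin\theta_{12}$, $\sqrt{1-y^2}=\sin\theta_{13}$, these roots are exactly $\cos(\theta_{12}\mp\theta_{13})$. Hence $\det\s E_3\ge 0$ holds precisely when $\cos(\theta_{12}+\theta_{13})\le\cos\theta_{23}\le\cos(\theta_{12}-\theta_{13})$, i.e. (since cosine is decreasing on $[0,\pi]$ and all angles lie in $[0,\tfrac{\pi}{2}]$) precisely when $|\theta_{12}-\theta_{13}|\le\theta_{23}\le\theta_{12}+\theta_{13}$. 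Positivity of $\s E_3$ is thus equivalent to the triangle inequalities for the three Bures angles.

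The substantive step---and the one I expect to be the main obstacle---is to establish these triangle inequalities, i.e. that $\theta_{ij}$ is a genuine metric. Here I would invoke Uhlmann's representation $\sqrt{\s F_{ij}}=\max|\langle\varphi_i|\varphi_j\rangle|$ over joint purifications, fix a purification $\ket{\varphi_2}$ of $\rho_2$, and (using that the maximum is still attained with $\ket{\varphi_2}$ held fixed) pick purifications $\ket{\varphi_1},\ket{\varphi_3}$ of $\rho_1,\rho_3$ with $\langle\varphi_2|\varphi_1\rangle=\sqrt{\s F_{12}}$ and $\langle\varphi_2|\varphi_3\rangle=\sqrt{\s F_{23}}$ real. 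Because $\sqrt{\s F_{13}}\ge|\langle\varphi_1|\varphi_3\rangle|$ and $\arccos$ is decreasing, it then suffices to bound the pure-state angle of the three unit vectors. Decomposing $\ket{\varphi_1}=\cos\theta_{12}\ket{\varphi_2}+\sin\theta_{12}\ket{u}$ and $\ket{\varphi_3}=\cos\theta_{23}\ket{\varphi_2}+\sin\theta_{23}\ket{v}$ with $\ket{u},\ket{v}\perp\ket{\varphi_2}$ gives $|\langle\varphi_1|\varphi_3\rangle|\ge\cos\theta_{12}\cos\theta_{23}-\sin\theta_{12}\sin\theta_{23}=\cos(\theta_{12}+\theta_{23})$, whence $\theta_{13}\le\theta_{12}+\theta_{23}$ (the case $\theta_{12}+\theta_{23}>\tfrac{\pi}{2}$ being trivial since $\theta_{13}\le\tfrac{\pi}{2}$). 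The remaining two inequalities follow by relabelling, and together with the reduction above this proves $\s E_3\ge 0$.

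Finally, for the failure at $K\ge 4$ I would produce an explicit counterexample from pure states, where $\sqrt{\s F_{ij}}=|\langle\psi_i|\psi_j\rangle|$; the mechanism is that taking entrywise moduli spoils positivity once relative phases are present (commuting states can never serve, since there $\sqrt{\s F_{ij}}=\sum_k\sqrt{p^{(i)}_k p^{(j)}_k}$ is an honest Gram matrix). Concretely, take the four real states $\ket{\psi_k}=\cos\theta_k\ket{0}+\sin\theta_k\ket{1}$ with $\theta_k=0,\tfrac{\pi}{4},\tfrac{\pi}{2},\tfrac{3\pi}{4}$, so that $\sqrt{\s F_{ij}}=|\cos(\theta_i-\theta_j)|$ and $\s E_4$ is circulant with first row $(1,\tfrac{1}{\sqrt2},0,\tfrac{1}{\sqrt2})$; its eigenvalues are $1\pm\sqrt2$ and $1$ (twice), so $\s E_4$ has the negative eigenvalue $1-\sqrt2$ and is not positive. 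For any $K>4$ one simply includes these four states among the $K$, making $\s E_K$ contain a non-positive $4\times4$ principal submatrix, hence $\s E_K$ is itself not positive.
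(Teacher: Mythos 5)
Your proof is correct, but it takes a genuinely different route from the paper's in both halves of the statement. The first reduction is identical: since the diagonal entries are $1$ and the $2\times2$ principal minors are $1-\s F_{ij}\ge 0$, positivity collapses to $\det \s E_3\ge 0$. From there, however, the paper expresses $\sqrt{\s F_{13}}$ and $\sqrt{\s F_{23}}$ as Hilbert--Schmidt inner products $\langle f,h\rangle_{\r{HS}}$, $\langle g,h\rangle_{\r{HS}}$ of unit ``vectors'' $f=\sqrt{\rho_1}\,U_1$, $g=\sqrt{\rho_2}\,U_2$, $h=\sqrt{\rho_3}$ built from polar decompositions, bounds $|\langle f,g\rangle_{\r{HS}}|\le\sqrt{\s F_{12}}$, and then invokes a dedicated optimization result (Lemma~\ref{lem2}, proved in the Appendix) computing the supremum over $h$ of the relevant quadratic functional. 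You instead recast $\det\s E_3\ge 0$ as the triangle inequalities for the Bures angles $\theta_{ij}=\arccos\sqrt{\s F_{ij}}$ and prove the metric property directly from Uhlmann's theorem in its fixed-purification form plus an elementary two-dimensional decomposition of unit vectors. Both arguments are at bottom purification arguments (the paper's Hilbert--Schmidt vectors are purifications in disguise), but yours replaces the explicit supremum computation by a geometrically transparent and reusable fact, and it carries a technical bonus: it needs no faithfulness assumption, whereas the paper assumes the $\rho_j$ invertible (to have unitary polar factors) and recovers the general case by continuity. For $K\ge4$ your argument is a strengthening of the paper's: where the paper only asserts that four-state counterexamples can be found numerically, you exhibit an explicit analytic one --- four equally spaced real qubit pure states giving a circulant $\s E_4$ with first row $(1,\tfrac{1}{\sqrt2},0,\tfrac{1}{\sqrt2})$ and eigenvalue $1-\sqrt2<0$ --- and then embed it as a principal submatrix for larger $K$, exactly as the paper does. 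The only step worth a one-line justification is your claim that the Uhlmann maximum is attained with $\ket{\varphi_2}$ held fixed: any two purifications of $\rho_1$ on the same ancilla differ by a unitary acting on the ancilla alone, so maximizing over that unitary already realizes the full Uhlmann maximization; this is standard, and is in fact precisely the form of Uhlmann's theorem quoted in Section~\ref{fidel} of the paper.
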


\begin{proof}
As the matrix~(\ref{sqrf}) is real symmetric, and each of its two-dimensional principal minors is non-negative, for $K=3$ it suffices to prove that its determinant is non-negative. To do that we must show that
\beq
 \s F_{12} +  \s F_{13} +  \s F_{23} \le 1 + 2 \sqrt{\s F_{12} \s F_{13} \s F_{23}}.
\label{thm1:2}
\eeq
We may assume that the $\rho_j$ are faithful (invertible). The general case is obtained by continuity. Denoting $\sqrt{\s F_{12}}$ by $a$, we certainly have that $0 \le a \le 1$. Therefore
\begin{equation*}
0 \le \s F_{13} + \s F_{23} - 2 a \sqrt{\s F_{13} \s F_{23}}
\end{equation*}
and we wish to upper bound this by $1 - a^2$. Using the polar decomposition and the assumption on the supports of the $\rho_j$, there exist unitary matrices $U_1$ and $U_2$ such that
\begin{equation*}
\sqrt{\rho_1} \sqrt{\rho_3} = U_1\, \bigl| \sqrt{\rho_1} \sqrt{\rho_3} \bigr|
\enskip\text{and}\enskip
\sqrt{\rho_2} \sqrt{\rho_3} = U_2\, \bigl| \sqrt{\rho_2} \sqrt{\rho_3} \bigr|.
\end{equation*}
We now express the root fidelities as follows
\begin{equation*}
\begin{split}
&\sqrt{\s F_{13}} = \tr \bigl| \sqrt{\rho_1} \sqrt{\rho_3} \bigr| = \tr U_1^* \sqrt{\rho_1} \sqrt{\rho_3} \\
&\sqrt{\s F_{23}} = \tr \bigl| \sqrt{\rho_2} \sqrt{\rho_3} \bigr| = \tr U_2^* \sqrt{\rho_2} \sqrt{\rho_3}.
\end{split}
\end{equation*}
Using the Hilbert-Schmidt scalar product the fidelities become
\begin{equation*}
\sqrt{\s F_{13}} = \bra f h \>_{\r{HS}}
\enskip\text{and}\enskip
\sqrt{ \s F_{23}} = \bra g h \>_{\r{HS}}
\end{equation*}
with
\begin{equation*}
\ket f := \sqrt{\rho_1}\, U_1,\enskip \ket g := \sqrt{\rho_2}\, U_2,\enskip \text{and } \ket h := \sqrt{\rho_3}.
\end{equation*}
We can then verify the following properties
\begin{align*}
&\bra f h \>_{\r{HS}} = \abs{\!\bra f h \>_{\r{HS}}},\enskip \bra g h \>_{\r{HS}} = \abs{\!\bra g h \>_{\r{HS}}}, \\
&\norm f_{\r{HS}} = \norm g_{\r{HS}} = \norm h_{\r{HS}} = 1,\enskip \text{and} \\
&\abs{\!\bra f g \>} = \abs{\tr U_1^* \sqrt{\rho_1} \sqrt{\rho_2} U_2} \le 
\sup_{W\ \text{unitary}}\ \abs{\tr \sqrt{\rho_1} \sqrt{\rho_2}\, W} = \sqrt{\s F_{12}} = a.
\end{align*}
The first statement of the proposition now follows from Lemma~\ref{lem2}, the proof of which can be found in the Appendix.

It is not difficult to numerically find sets with four states for which $\s E_4(\{\rho_j\}) \not\ge 0$. Using these states as a subset for larger $K$ renders the positivity of $\s E_K$ in general impossible.
\end{proof}

\begin{lemma}
\label{lem2}
Let $\ket f$ and $\ket g$ be normalized vectors in a Hilbert space $\c H$ and let $a$ be such that $\abs{\!\bra f g \>} \le a \le 1$, then
\begin{equation*}
\begin{split}
&\sup_{\ket h,\ \norm h \le 1} \Bigl( \abs{\!\bra f h \>}^2 + \abs{\!\bra g h \>}^2 - 2 a \abs{\!\bra f h \>} \abs{\!\bra g h \>} \Bigr) \\
&\qquad = (1 - a) (1 + \abs{\!\bra f g \>}) \le (1 - a^2).
\end{split}
\end{equation*}
\end{lemma}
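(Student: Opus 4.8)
The goal is to compute the supremum over $\ket h$ with $\norm h \le 1$ of the quantity
$\abs{\!\bra f h \>}^2 + \abs{\!\bra g h \>}^2 - 2 a \abs{\!\bra f h \>} \abs{\!\bra g h \>}$,
given normalized $\ket f, \ket g$ with $\abs{\!\bra f g \>} \le a \le 1$. My plan is to reduce the whole problem to a two-dimensional one. Since the objective depends on $\ket h$ only through the two numbers $\abs{\!\bra f h \>}$ and $\abs{\!\bra g h \>}$, only the projection of $\ket h$ onto $V := \spa\{\ket f, \ket g\}$ matters: any component orthogonal to $V$ costs norm without increasing either inner product, so at the optimum $\ket h \in V$ and $\norm h = 1$. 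Thus I may work entirely inside the (at most) two-dimensional space $V$.

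**Parametrizing the two-dimensional problem.**

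Let $c := \abs{\!\bra f g \>}$, so $c \le a$. By absorbing phases into $\ket f$ and $\ket g$ I can assume $\bra f g \> = c \ge 0$. I then write the objective as $F(s,t) := s^2 + t^2 - 2 a s t$ where $s := \abs{\!\bra f h \>} \in [0,1]$ and $t := \abs{\!\bra g h \>} \in [0,1]$. The key step is to identify exactly which pairs $(s,t)$ are attainable by a unit vector $\ket h \in V$. Since $\{\ket f, \ket g\}$ spans $V$ with Gram matrix $\begin{bmatrix} 1 & c \\ c & 1 \end{bmatrix}$, the attainable region is governed by this Gram structure; concretely, the set of achievable $(s,t)$ is the image of the unit sphere in $V$ under $\ket h \mapsto (\abs{\!\bra f h \>}, \abs{\!\bra g h \>})$. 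I expect this region to be a closed convex-looking arc whose boundary is traced when $\ket h$ ranges over the relevant unit vectors, and the admissible $(s,t)$ satisfy a single quadratic constraint coming from the Cauchy--Schwarz/Gram relation, of the form $s^2 + t^2 - 2 c s t \le 1 - c^2$ (with equality on the boundary).

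**Optimizing on the boundary.**

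Granting that description, the maximum of $F(s,t) = s^2 + t^2 - 2 a s t$ occurs on the boundary curve $s^2 + t^2 - 2 c s t = 1 - c^2$. I would then maximize $F$ subject to this constraint, most cleanly via Lagrange multipliers or by substituting the constraint to eliminate one variable. The difference between objective and constraint is $F(s,t) - (s^2 + t^2 - 2 c s t) = 2(c - a) s t$, which since $c \le a$ is maximized by making $st$ as small as possible on the boundary; this pushes the optimum to a boundary endpoint. Evaluating there should give the claimed value $(1 - a)(1 + c)$. Finally the bound $(1-a)(1+c) \le 1 - a^2$ follows immediately from $c \le a$, since $(1-a)(1+c) \le (1-a)(1+a) = 1 - a^2$.

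**Main obstacle.**

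The main technical obstacle is pinning down the exact attainable region for $(s,t)$ and justifying that the supremum is achieved on its boundary. Writing $\ket h = \alpha \ket f + \beta \ket g$ and translating $\norm h = 1$ together with the formulas for $s = \abs{\!\bra f h \>}$ and $t = \abs{\!\bra g h \>}$ into an explicit inequality requires care with the Gram matrix and with the phases of $\alpha, \beta$; one must check that allowing $\ket h$ to use a genuinely two-dimensional (not collinear with $\ket f$ or $\ket g$) direction does not enlarge the objective beyond the endpoint value. Once the region is correctly described as $\{s^2 + t^2 - 2 c s t \le 1 - c^2,\ s,t \ge 0\}$, the remaining optimization is routine, so I would concentrate the effort on establishing that constraint rigorously.
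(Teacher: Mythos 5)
Your two reduction steps are sound, and in fact they can be made rigorous exactly as you suspect: restricting to unit vectors in $\spa\{f,g\}$ is legitimate, and the attainable region is precisely $\{(s,t)\,:\, s,t\ge 0,\ s^2+t^2-2cst\le 1-c^2\}$ with $c=\abs{\langle f,g\rangle}$ (the inclusion follows from positive semi-definiteness of the Gram matrix of $(f,g,h)$, and the boundary arc is attained by unit vectors in the span with suitably aligned phases). The fatal problem is the one step you asserted instead of computing. On the boundary arc the objective is $F(s,t)=(1-c^2)-2(a-c)st$, so, as you say, it is maximal where $st$ is smallest, i.e.\ at the endpoints $(\sqrt{1-c^2},0)$ and $(0,\sqrt{1-c^2})$ where $st=0$. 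But evaluating there gives $F=1-c^2=(1-c)(1+c)$, \emph{not} $(1-a)(1+c)$; the discrepancy is $(a-c)(1+c)$, which is strictly positive whenever $a>c$. So your final sentence (``evaluating there should give the claimed value'') is false, and no repair of that step is possible.

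Carried out honestly, your method therefore disproves Lemma~\ref{lem2} rather than proving it, and you should be aware that this is a defect of the paper's statement and proof, not only of your attempt. The cleanest counterexample needs no machinery at all: take $h=f$, which is admissible since $\norm h=1$. The functional then equals $1+c^2-2ac$, which exceeds the claimed supremum $(1-a)(1+c)$ by $(a-c)(1-c)$ and exceeds the claimed upper bound $1-a^2$ by $(a-c)^2$; for instance $f\perp g$ and $a=\tfrac12$ give the value $1$, versus a claimed supremum of $\tfrac12$ and a claimed bound of $\tfrac34$. For comparison, the paper's own proof claims the supremum is attained at $\lambda=\mu$ (in your coordinates, at the symmetric point $s=t$ of the arc); but that point \emph{maximizes} $st$, hence minimizes $F$ over the arc when $a>c$. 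It would be the maximizer only under the reversed hypothesis $a\le\abs{\langle f,g\rangle}$, under which, however, the final inequality $(1-a)(1+c)\le 1-a^2$ fails; the chain of claims is consistent only when $a=\abs{\langle f,g\rangle}$ exactly. (The point $(\lambda,\mu)=(1,0)$, i.e.\ $h=f$, satisfies the paper's constraint $\abs{\lambda^2+\mu^2-1}\le 2\lambda\mu t$ and beats the paper's claimed optimum.) This also means the paper's derivation of Proposition~\ref{thm1} from this lemma is broken as written: the configuration $h=f$, $\langle f,g\rangle=c<a$ satisfies all four properties listed in that proof, yet violates the needed conclusion. Proposition~\ref{thm1} itself is nevertheless true --- it is equivalent to the triangle inequality for the Bures angle $\arccos\sqrt{\s F}$ --- but proving it requires using more structure than the four abstract properties fed into Lemma~\ref{lem2}.
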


Positivity of the matrix $\s E_K(\{\rho_i\})$ of root fidelities~(\ref{fidmat}) for $K \!= \!3$ provides us with a simple continuity estimation.  Suppose that $\s F_{23} = 1$, then equation~\eqref{thm1:2} immediately leads to $\s F_{12} = \s F_{13}$, as expected. The next corollary gives a smooth version of this observation:

\begin{corollary}
\label{cor1}
\beq
\bigl| \sqrt{\s F_{12}} - \sqrt{\s F_{13}} \bigr| \le \sqrt{1 - \s F_{23}}
\label{jeden}
\eeq
\beq
\bigl|\s F_{12} - \s F_{13} \bigr| \le 2 \sqrt{1 - \s F_{23}}.
\label{dwa}
\eeq
\end{corollary}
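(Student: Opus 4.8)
The plan is to read both inequalities directly off the determinant bound~\eqref{thm1:2} established in Proposition~\ref{thm1}, namely $\s F_{12} + \s F_{13} + \s F_{23} \le 1 + 2\sqrt{\s F_{12}\s F_{13}\s F_{23}}$. To lighten the notation I would abbreviate the three root fidelities by $a := \sqrt{\s F_{12}}$, $b := \sqrt{\s F_{13}}$, and $c := \sqrt{\s F_{23}}$, each lying in $[0,1]$. In these variables the bound reads $a^2 + b^2 + c^2 \le 1 + 2abc$, which I would immediately rearrange into the form
\beq
a^2 + b^2 - 2abc \le 1 - c^2.
\eeq

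For the first inequality~\eqref{jeden}, the key observation is that $c \le 1$ together with $a,b \ge 0$ forces $2abc \le 2ab$, so that
\beq
(a - b)^2 = a^2 + b^2 - 2ab \le a^2 + b^2 - 2abc \le 1 - c^2 = 1 - \s F_{23}.
\eeq
Taking square roots gives $\bigl|\sqrt{\s F_{12}} - \sqrt{\s F_{13}}\bigr| \le \sqrt{1 - \s F_{23}}$, which is exactly~\eqref{jeden}.

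For the second inequality~\eqref{dwa}, I would factor the difference of fidelities as $|\s F_{12} - \s F_{13}| = |a^2 - b^2| = |a - b|(a + b)$ and bound the sum by $a + b \le 2$, since both $a,b \in [0,1]$. Combining this with the estimate~\eqref{jeden} just obtained yields $|\s F_{12} - \s F_{13}| \le 2|a - b| \le 2\sqrt{1 - \s F_{23}}$, establishing~\eqref{dwa}.

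There is essentially no obstacle here: all of the analytic content is packaged in Proposition~\ref{thm1}, and the corollary is a purely algebraic consequence of the positivity it asserts. The only point requiring a moment's care is the direction of the estimate $2abc \le 2ab$, which relies on the nonnegativity of $a,b$ together with $c \le 1$; this is precisely where the three quantities being root fidelities, hence confined to $[0,1]$, enters the argument.
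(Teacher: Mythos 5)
Your proof is correct and follows essentially the same route as the paper: both derive the corollary purely algebraically from the determinant inequality~\eqref{thm1:2}, with the key step being that $2\sqrt{\s F_{12}\s F_{13}\s F_{23}} \le 2\sqrt{\s F_{12}\s F_{13}}$ (equivalently, the paper drops the nonnegative term $2\sqrt{\s F_{12}\s F_{13}}\,(1-\sqrt{\s F_{23}})$), yielding $\bigl(\sqrt{\s F_{12}}-\sqrt{\s F_{13}}\bigr)^2 \le 1-\s F_{23}$. Your explicit factoring $|\s F_{12}-\s F_{13}| = |a-b|(a+b) \le 2|a-b|$ merely spells out the step the paper leaves implicit in passing from~\eqref{jeden} to~\eqref{dwa}.
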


\begin{proof}
We can rewrite inequality~(\ref{thm1:2}) as  
\begin{equation*}
\begin{split}
\s F_{12} + \s F_{13} - 2 \sqrt{\s F_{12} \s F_{13}} 
&\le 1 + 2 \sqrt{\s F_{12} \s F_{13} \s F_{23}} - \s F_{23} - 2 \sqrt{\s F_{12} \s F_{13}} \\
&\le (1 - \s F_{23}) - 2 \sqrt{\s F_{12} \s F_{13}} (1 - \sqrt{\s F_{23}}) \\
&\le (1 - \s F_{23}).
\end{split}
\end{equation*}
Taking the square root of both sides one arrives at the first statement (\ref{jeden}) of the corollary, which implies the second one (\ref{dwa}). These statements show that the fidelity between quantum states is continuous with respect to a variation of one of its arguments. 
\end{proof}

Now that we have established the positivity of $\s E_3(\{\rho_i\})$ and hence of 
\beq
\s C_{\sqrt{\s F}}(\c E_3) := \bigl[ \sqrt{p_i p_j}\, \sqrt{\s F_{ij}}\bigr]_{ij}
\label{CrootFid}
\eeq 
we can try to link the entropy of $\s C_{\sqrt{\s F}}(\c E_3)$ to the corresponding Holevo quantity. However, Figure~\ref{fig:violation} and the fact that the positivity is only possible for very small ensembles ($K \le 3$) already point to a hard to ascertain connection. We formulate the following conjecture that is well-supported by numerical evidence:

\begin{conjecture}
\label{conj}
For any ensemble $\c E_3 = \{(p_i,\rho_i)\}$ with $K \!= \!3$ three states of arbitrary dimension $d$ the corresponding Holevo quantity $\chi(\c E_3)$ is bounded from above by the entropy of the matrix $\s C_{\sqrt{\s F}}(\c E_3)$ defined in~(\ref{CrootFid}):
\beq
\chi\bigl( \c E_3 \bigr) \le \s S\bigl( \s C_{\sqrt{\s F}}(\c E_3) \bigr).
\label{eq:conj}
\eeq 
\end{conjecture}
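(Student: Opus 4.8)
The plan is to reduce the conjectured inequality~(\ref{eq:conj}) to an explicit scalar inequality in the weights $p_i$ and the three root fidelities $\sqrt{\s F_{ij}}$, by combining the \emph{exactness} of the two-state bound~(\ref{2bound}) with a chain rule for the Holevo quantity. Singling out the third state and writing $\sigma = (p_1\rho_1+p_2\rho_2)/(p_1+p_2)$ and $q_i = p_i/(p_1+p_2)$, a direct computation gives
\begin{equation*}
\chi(\c E_3) = \chi\bigl(\{(p_3,\rho_3),(p_1{+}p_2,\sigma)\}\bigr) + (p_1{+}p_2)\,\chi\bigl(\{(q_1,\rho_1),(q_2,\rho_2)\}\bigr).
\end{equation*}
Both terms on the right are two-state Holevo quantities, so the tight bound~(\ref{2bound}) applies to each.

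Applying~(\ref{2bound}) bounds $\chi(\c E_3)$ by a sum of two $2\times 2$ matrix entropies, whose only non-pairwise ingredient is the mixture fidelity $\s F(\rho_3;\sigma)$. Since a $2\times2$ entropy decreases as its off-diagonal modulus grows, I may replace $\sqrt{\s F(\rho_3;\sigma)}$ by any lower bound without invalidating the estimate; the joint concavity of $\sqrt{\s F}$ recalled in Section~\ref{fidel} supplies $\sqrt{\s F(\rho_3;\sigma)} \ge q_1\sqrt{\s F_{13}}+q_2\sqrt{\s F_{23}}$. This yields
\begin{equation*}
\chi(\c E_3) \le \s S\!\begin{bmatrix} p_3 & \sqrt{p_3(p_1{+}p_2)}\,(q_1\sqrt{\s F_{13}}{+}q_2\sqrt{\s F_{23}}) \\ \cdot & p_1{+}p_2 \end{bmatrix} + (p_1{+}p_2)\,\s S\!\begin{bmatrix} q_1 & \sqrt{q_1 q_2\, \s F_{12}} \\ \cdot & q_2 \end{bmatrix},
\end{equation*}
an upper bound depending only on $p_1,p_2,p_3$ and the three pairwise root fidelities. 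There are three such bounds, one per choice of the singled-out state, and~(\ref{eq:conj}) follows as soon as one of them does not exceed $\s S\bigl(\s C_{\sqrt{\s F}}(\c E_3)\bigr)$.

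A clean instance, which I would use as a consistency check, is that of pure states $\rho_i = \proj{\psi_i}$. There $\s S(\rho_i)=0$, so $\chi(\c E_3)=\s S(\rho)=\s S\bigl(\s C(\c E_3)\bigr)$ since $\s C(\c E_3)=\bigl[\sqrt{p_i p_j}\bra{\psi_i}\psi_j\>\bigr]_{ij}$ is the Gram matrix of $\{\sqrt{p_i}\ket{\psi_i}\}$ and shares the spectrum of $\rho=\sum_i p_i\proj{\psi_i}$; moreover $\s C_{\sqrt{\s F}}(\c E_3)$ is obtained from $\s C(\c E_3)$ by replacing each off-diagonal entry by its modulus. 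This preserves the diagonal and all off-diagonal moduli, hence the trace and the sum of the $2\times2$ principal minors, and can only increase the determinant, because $\r{Re}(\s C_{12}\s C_{23}\s C_{31}) \le |\s C_{12}\s C_{23}\s C_{31}|$. It then remains to invoke the elementary fact -- a consequence of the concavity of $\log$ -- that, at fixed trace and fixed sum of $2\times2$ principal minors, the entropy of a $3\times3$ density matrix increases with its determinant, giving $\s S\bigl(\s C(\c E_3)\bigr)\le\s S\bigl(\s C_{\sqrt{\s F}}(\c E_3)\bigr)$ and hence~(\ref{eq:conj}) for pure ensembles.

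The main obstacle is the concluding scalar inequality for mixed states: one must show that, for every admissible configuration, at least one of the three chain-rule bounds stays below $\s S\bigl(\s C_{\sqrt{\s F}}(\c E_3)\bigr)$. The bounds agree with $\s S\bigl(\s C_{\sqrt{\s F}}\bigr)$ at the extreme configurations (all fidelities equal to $0$ or all equal to $1$), but in between the comparison pits a sum of two binary entropies against the entropy of a $3\times3$ matrix whose very positivity is delicate (Proposition~\ref{thm1}). I expect the crux to be controlling the two concavity steps -- discarding the off-diagonal phases and lower-bounding the mixture fidelity -- so that they do not overshoot; this will likely require partitioning the parameter region according to which pair has the largest fidelity, and verifying that the qubit and commuting cases singled out by the numerics are indeed the extremal ones.
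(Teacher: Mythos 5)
You should first be aware that the statement you attempted is a \emph{Conjecture}: the paper itself contains no proof of~(\ref{eq:conj}) for mixed states, only (i) a proof in the pure-state case, (ii) numerical evidence (Figure~\ref{fig:conj}), and (iii) the strictly weaker substitutes Proposition~\ref{triples} and Proposition~\ref{propo1}. Your pure-state consistency check is correct and is essentially the paper's own argument (the paper cites~\cite{MJ04} for the determinant--entropy monotonicity that you justify via concavity of $\log$). Your mixed-state reduction is also valid as far as it goes: the chain rule is exact, and each application of~(\ref{2bound}) and of joint concavity is legitimate. The fatal problem is that the concluding scalar inequality you need --- that at least one of the three chain-rule bounds stays below $\s S\bigl(\s C_{\sqrt{\s F}}(\c E_3)\bigr)$ --- is \emph{false}, not merely unproven. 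Take $p_1=p_2=p_3=\tfrac13$, $\rho_1=\proj{e_1}$, $\rho_2=\proj{e_2}$, $\rho_3=\proj{\psi}$ with $\ket{\psi}=(\ket{e_1}+\ket{e_2})/\sqrt2$, so that $\s F_{12}=0$ and $\s F_{13}=\s F_{23}=\tfrac12$. Then $\s C_{\sqrt{\s F}}(\c E_3)$ has eigenvalues $\{\tfrac23,\tfrac13,0\}$, hence $\s S\bigl(\s C_{\sqrt{\s F}}(\c E_3)\bigr)=\ln 3-\tfrac23\ln 2\approx 0.637$ (natural logarithms); this also equals $\chi(\c E_3)$, so the conjecture holds here with equality. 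Your bound with $\rho_3$ singled out evaluates to
\begin{equation*}
\s S\!\left(\begin{bmatrix} \tfrac13 & \tfrac13 \\ \tfrac13 & \tfrac23 \end{bmatrix}\right) + \tfrac23\,\ln 2 \;\approx\; 0.381 + 0.462 \;=\; 0.843,
\end{equation*}
while singling out $\rho_1$ (or $\rho_2$) gives $\approx 0.578+0.278=0.855$. All three chain-rule bounds exceed $0.637$, so no choice of singled-out state can close the argument. Note that in the $\rho_3$ case the concavity step is lossless, since $\sqrt{\s F(\rho_3;\sigma)}=1/\sqrt2=q_1\sqrt{\s F_{13}}+q_2\sqrt{\s F_{23}}$; the overshoot is therefore not caused by the step you identified as the crux.

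The real culprit is your opening premise, the ``exactness'' of~(\ref{2bound}). That bound is exact only as the solution of the minimization~(\ref{minSC}) over correlation matrices; as an upper bound on the two-state Holevo quantity it is generically strict. In the example, $\chi\bigl(\{(\tfrac13,\rho_3),(\tfrac23,\tfrac12\idty)\}\bigr)\approx 0.174$, whereas~(\ref{2bound}) gives $\approx 0.381$; chaining two such strict bounds compounds the slack and overshoots the three-state target. This is precisely why the paper, which runs exactly your machinery in the proof of Proposition~\ref{triples} --- split off one state, apply~(\ref{2bound}) twice, then use concavity of fidelity --- lands only at the weighted sum of pairwise entropies and never claims that this sum is dominated by $\s S\bigl(\s C_{\sqrt{\s F}}(\c E_3)\bigr)$; in the example above the Proposition~\ref{triples} bound evaluates to $\approx 1.02$, well above $0.637$. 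Any successful attack on the conjecture must keep the three states coupled rather than reduce to pairs: the paper's partial results in that direction (Proposition~\ref{propo1} and the propositions of Section~\ref{Karb}) construct a tripartite ansatz state and invoke strong subadditivity~(\ref{SSA}), and the open difficulty is to build a positive ansatz whose relevant marginal is exactly $\s C_{\sqrt{\s F}}(\c E_3)$ rather than the Hadamard-damped version $\s C_{\sqrt{\s F}}(\c E_3)\circ\s T_b$ with $b\le\tfrac12$.
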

 
\medskip
This conjecture is easily seen to be true for the particular case of ensembles of pure states. In this case, the Holevo quantity~\eqref{chi} is equal to $\s S\bigl( \sum_i p_i \proj{\varphi_i} \bigr)$, which  in turn is equal to $\s S\bigl( \bigl[ \sqrt{p_i p_j} \bra{\varphi_i}\varphi_j\> \bigr]_{ij} \bigr)$, for both matrices have, up to multiplicities of zero, the same spectrum. The matrix of fidelities, $\s C_{\sqrt{\s F}}$, is obtained from $\bigl[ \sqrt{p_i p_j} \bra{\varphi_i}\varphi_j\> \bigr]_{ij}$ by simply taking the absolute value of all its entries. This procedure can only increase the determinant, increasing thus the entropy~\cite{MJ04}.

\begin{figure}[h!tf]
\centering
\scalebox{.65}{\includegraphics{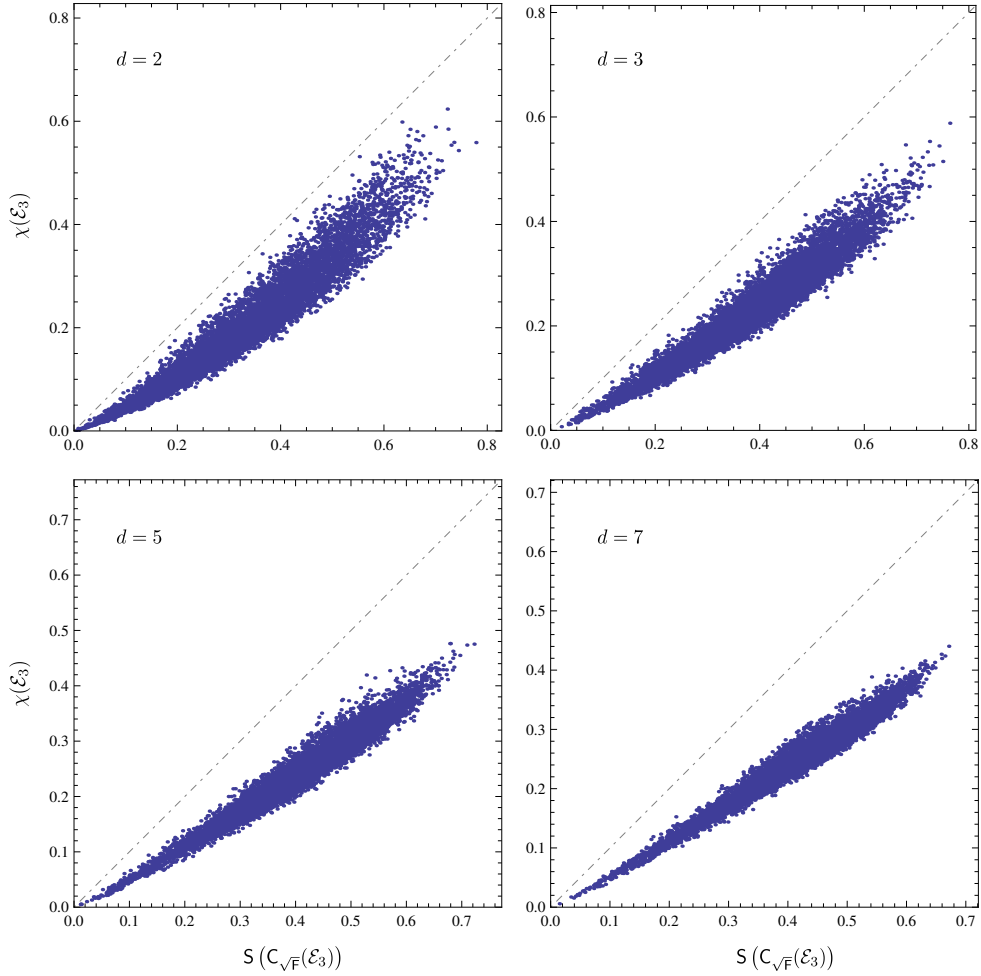}}
\caption{10000 ensembles of $K\!=\!3$ random states distributed according to the Hilbert--Schmidt measure~\cite{ZS01} and plotted on the plane $\left( \s S\bigl( \s C_{\sqrt{\s F}}(\c E_3) \bigr),\chi(\c E_3) \right)$ for $d=2,3,5,7$ support Conjecture~\ref{conj}.
}
\label{fig:conj}
\end{figure}

For the general case of mixed states extensive numerical studies support the conjecture, as partially shown in Figure~\ref{fig:conj}. Some further particular cases, and weaker forms of this conjecture can be proven. 

\begin{proposition}
\label{triples}
Let $\c E_3 = \{(p_1, \rho_1),(p_2, \rho_2),(p_3, \rho_3)\}$ be an ensemble with $K=3$ states. Then, the corresponding Holevo quantity is upper bounded by the weighted sum of the entropies of $\s C_{\sqrt{\s F}}(\c E_2)$, with $\c E_2$ all the possible two states sub-ensembles $\c E_2 = \{(p_i/(p_i+p_j),\rho_i),(p_j/(p_i+p_j),\rho_j)\}$, $i,j= 1,2,3$. That is,
\beq
\chi(\c E_3) \le \sum_{i<j} (p_i + p_j) \s S\bigl( \s C_{\sqrt{\s F}}(\c E_2)\bigr),
\label{triple}
\eeq
where $\s C_{\sqrt{\s F}}(\c E_2) = \bigl[ \sqrt{p_i p_j} \sqrt{\s F(\rho_i; \rho_j)}/(p_i+p_j) \bigr]_{ij}$.
\end{proposition}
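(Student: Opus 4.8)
The plan is to deduce the bound from the two-state estimate \eqref{2bound} applied to each of the three pairs, together with a single entropic inequality relating the three-state Holevo quantity to its pairwise values. Since \eqref{2bound} gives $\s S\bigl(\s C_{\sqrt{\s F}}(\c E_2^{ij})\bigr)\ge\chi(\c E_2^{ij})$ for every two-element subensemble $\c E_2^{ij}=\{(p_i/(p_i+p_j),\rho_i),(p_j/(p_i+p_j),\rho_j)\}$, it suffices to prove
\beq
\chi(\c E_3)\le\sum_{i<j}(p_i+p_j)\,\chi(\c E_2^{ij}).
\label{plan:pair}
\eeq
Everything after this reduction is a statement about von Neumann and relative entropies only: the fidelities have done their job through \eqref{2bound}.

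To analyse \eqref{plan:pair} I would pass to the relative-entropy form of the Holevo quantity. A direct rewriting of the definition \eqref{chi} gives $\chi(\c E_K)=\sum_k p_k\,\s S(\rho_k\,\|\,\rho)$, where $\rho=\sum_k p_k\rho_k$ and $\s S(\sigma\,\|\,\tau)=\tr\sigma(\log\sigma-\log\tau)$; set also $\rho_{ij}=(p_i\rho_i+p_j\rho_j)/(p_i+p_j)$. A short computation yields the exact per-pair identity $q_i\,\s S(\rho_i\,\|\,\rho)+q_j\,\s S(\rho_j\,\|\,\rho)=\s S(\rho_{ij}\,\|\,\rho)+\chi(\c E_2^{ij})$ with $q_i=p_i/(p_i+p_j)$. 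Multiplying by $p_i+p_j$ and summing over the three pairs, and using that each state occurs in exactly two of the pairs, one obtains $2\chi(\c E_3)=\sum_{i<j}(p_i+p_j)\,\s S(\rho_{ij}\,\|\,\rho)+\sum_{i<j}(p_i+p_j)\,\chi(\c E_2^{ij})$. Hence \eqref{plan:pair} is \emph{equivalent} to the clean inequality
\beq
\sum_{i<j}(p_i+p_j)\,\s S(\rho_{ij}\,\|\,\rho)\le\chi(\c E_3).
\label{plan:crux}
\eeq

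The hard part is \eqref{plan:crux}. Joint convexity of the relative entropy in its first argument only yields $\s S(\rho_{ij}\,\|\,\rho)\le q_i\,\s S(\rho_i\,\|\,\rho)+q_j\,\s S(\rho_j\,\|\,\rho)$, which after summation gives the right-hand side $2\chi(\c E_3)$: exactly a factor of two too weak, the factor two being precisely the multiplicity with which each state is covered by the pairs. To gain this factor I would instead try to prove the stronger state-wise inequality $\s S(\rho_k\,\|\,\rho)\le\sum_{l\ne k}\s S(\rho_k\,\|\,\rho_{kl})$ for each $k$, whose $p_k$-weighted sum is exactly \eqref{plan:pair}; rearranged it becomes the operator-trace inequality $\tr\rho_k\bigl(\log\rho_{k l_1}+\log\rho_{k l_2}\bigr)\le\tr\rho_k\bigl(\log\rho_k+\log\rho\bigr)$ for the two indices $l_1,l_2\ne k$. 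Here the obstacle is sharp: the naive estimates from operator monotonicity of the logarithm, such as $\log\rho_{kl}\le\log\rho-\log(p_k+p_l)$ coming from $\rho\ge(p_k+p_l)\rho_{kl}$, leave a residue depending only on the weights that does not vanish when the states coincide, so they cannot close the gap. I expect the argument to require the finer, resolvent-level information in the integral representation $\log A-\log B=\int_0^\infty\bigl[(B+s)^{-1}-(A+s)^{-1}\bigr]\,ds$, comparing $(\rho_{k l_1}+s)^{-1}$ and $(\rho_{k l_2}+s)^{-1}$ against $(\rho_k+s)^{-1}$ and $(\rho+s)^{-1}$ and exploiting the identity $\rho+p_k\rho_k=(p_k+p_{l_1})\rho_{k l_1}+(p_k+p_{l_2})\rho_{k l_2}$, a relation special to $K=3$. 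Once these state-wise inequalities are in hand, their $p_k$-weighted sum is \eqref{plan:pair} (equivalently \eqref{plan:crux}), and re-inserting \eqref{2bound} upgrades the pairwise Holevo quantities to the stated correlation-matrix entropies, proving the proposition.
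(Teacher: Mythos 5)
Your algebraic reduction is correct as far as it goes: writing $\rho_{ij}=(p_i\rho_i+p_j\rho_j)/(p_i+p_j)$, $q_i=p_i/(p_i+p_j)$, and $\c E_2^{ij}$ for the two-state sub-ensemble on $\{\rho_i,\rho_j\}$, the identity $q_i\,\s S(\rho_i\,\|\,\rho)+q_j\,\s S(\rho_j\,\|\,\rho)=\s S(\rho_{ij}\,\|\,\rho)+\chi(\c E_2^{ij})$ holds, the claimed equivalence follows, and your pairwise inequality $\chi(\c E_3)\le\sum_{i<j}(p_i+p_j)\chi(\c E_2^{ij})$ together with \eqref{2bound} would indeed imply the proposition (in fact a sharper version, since $\chi(\c E_2^{ij})\le\s S\bigl(\s C_{\sqrt{\s F}}(\c E_2^{ij})\bigr)$). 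The problem is that this pairwise inequality, which in your scheme carries the entire mathematical content of the proposition, is never proved: your final paragraph only records what you \emph{expect} a proof to require. Worse, the concrete route you propose is blocked, because the state-wise inequality $\s S(\rho_k\,\|\,\rho)\le\sum_{l\ne k}\s S(\rho_k\,\|\,\rho_{kl})$ is false. Counterexample: qubits $\rho_1=\mathrm{diag}(0.9,0.1)$, $\rho_2=\rho_3=\mathrm{diag}(0.1,0.9)$, weights $p_1=p_3=0.495$, $p_2=0.01$; then $\rho=\mathrm{diag}(0.496,0.504)$, $\rho_{12}\approx\mathrm{diag}(0.884,0.116)$, $\rho_{13}=\mathrm{diag}(0.5,0.5)$, and since all matrices commute these are classical relative entropies: $\s S(\rho_1\,\|\,\rho)\approx 0.3747$, while $\s S(\rho_1\,\|\,\rho_{12})+\s S(\rho_1\,\|\,\rho_{13})\approx 0.0013+0.3681=0.3694$. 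The failure is structural, not numerical: for faithful $\rho_1$ and small $p_2$, the term $\s S(\rho_1\,\|\,\rho_{12})$ is of second order in $p_2$, whereas $\s S(\rho_1\,\|\,\rho)-\s S(\rho_1\,\|\,\rho_{13})$ is of first order in $p_2$ with a coefficient ($0.64$ in this example) that need not be nonpositive; no resolvent-level refinement can repair an order mismatch. (The weighted-sum version, i.e.\ your pairwise inequality itself, does survive in this example, but you give no argument for it, and it does not follow from convexity, which as you note loses exactly a factor of two.)

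For comparison, the paper's proof avoids this trap entirely. It applies the chain rule once, asymmetrically: $\chi(\c E_3)$ equals the Holevo quantity of the two-state ensemble $\{(p_1,\rho_1),(1-p_1,\rho_{23})\}$ plus $(p_2+p_3)\chi(\c E_2^{23})$, and both terms are bounded via \eqref{2bound}. What remains is a comparison of $2\times 2$ correlation-matrix entropies: the entropy built on $\s F(\rho_1;\rho_{23})$ must be dominated by $(p_1+p_2)\,\s S\bigl(\s C_{\sqrt{\s F}}(\c E_2^{12})\bigr)+(p_1+p_3)\,\s S\bigl(\s C_{\sqrt{\s F}}(\c E_2^{13})\bigr)$. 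Since the entropy of a $2\times 2$ density matrix is a function $f$ of its determinant alone, this is a scalar inequality, settled by the property $f(a^2x+b^2y)\le a f(x)+b f(y)$, the monotonicity of $f$, and, crucially, the concavity of fidelity in its second argument, $(p_2+p_3)\,\s F(\rho_1;\rho_{23})\ge p_2\,\s F_{12}+p_3\,\s F_{13}$. Note that this fidelity-specific property enters \emph{after} the two-state bounds are in place, so your opening premise that ``the fidelities have done their job through \eqref{2bound}'' is precisely where your plan diverges from the known proof: discarding the fidelities forces you to prove a purely entropic statement that is strictly stronger than the proposition, and that statement remains unestablished in your write-up.
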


\begin{proof}
We start by applying the bound~(\ref{2bound}) twice:
\begin{align*}
&\s S(p_1\rho_1 + p_2\rho_2 + p_3 \rho_3) - p_1 \s S(\rho_1) - p_2 \s S(\rho_2) - p_3 \s S(\rho_3) \\
&\le \left\{ \s S \Bigl( p_1 \rho_1 + (1-p_1) \frac{p_2 \rho_2 + p_3 \rho_3}{p_2 + p_3} \Bigr) - p_1  \s S(\rho_1) - (1-p_1) \s S\Bigl( \frac{p_2 \rho_2 + p_3 \rho_3}{p_2 + p_3} \Bigr) \right\} \\
&\quad+ (p_2 + p_3) \left\{ \s S\Bigl( \frac{p_2 \rho_2 + p_3 \rho_3}{p_2 + p_3} \Bigr) - \frac{p_2}{p_2 + p_3}\, \s S(\rho_2) - \frac{p_3}{p_2 + p_3}\,  \s S(\rho_3) \right\} \\[6pt]
&\le  \s S \left( \begin{bmatrix}
p_1 &\sqrt{p_1 (1-p_1)} \sqrt{\s F\bigl( \rho_1;\frac{p_2 \rho_2 + p_3 \rho_3}{p_2 + p_3}} \bigr) \\
\sqrt{p_1 (1-p_1)} \sqrt{\s F\bigl( \rho_1;\frac{p_2 \rho_2 + p_3 \rho_3}{p_2 + p_3} \bigr)} &1-p_1 \end{bmatrix} \right) \\[6pt]
&\quad+ (p_2 + p_3) \s S \left( \frac{1}{p_2 + p_3}\,
\begin{bmatrix}
p_2 &\sqrt{p_2 p_3} \sqrt{\s F(\rho_2;\rho_3) }\\
\sqrt{p_2 p_3} \sqrt{\s F(\rho_2;\rho_3)} &p_3 \end{bmatrix} \right).
\end{align*}
In order to prove~(\ref{triple}) it therefore suffices to show that
\begin{multline}
\s S \left( \begin{bmatrix}
p_1 &\sqrt{p_1 (1-p_1)} \sqrt{\s F\bigl( \rho_1;\frac{p_2 \rho_2 + p_3 \rho_3}{p_2 + p_3} \bigr)} \\
\sqrt{p_1 (1-p_1)}\sqrt{ \s F\bigl( \rho_1;\frac{p_2 \rho_2 + p_3 \rho_3}{p_2 + p_3} \bigr)} &1-p_1 \end{bmatrix} \right) \\[6pt]
\shoveright{\le (p_1 + p_2)\,  \s S \left( \frac{1}{p_1 + p_2}\,
\begin{bmatrix}
p_1 &\sqrt{p_1 p_2} \sqrt{ \s F(\rho_1;\rho_2)} \\
\sqrt{p_1 p_2} \sqrt{\s F(\rho_1;\rho_2)} &p_2 \end{bmatrix} \right) \phantom{.}} \\[6pt]
+ (p_1 + p_3)\, \s S \left( \frac{1}{p_1 + p_3}\,
\begin{bmatrix}
p_1 &\sqrt{p_1 p_3} \sqrt{\s F(\rho_1;\rho_3)} \\
\sqrt{p_1 p_3} \sqrt{\s F(\rho_1;\rho_3)} &p_3 \end{bmatrix} \right) .
\label{bound}
\end{multline}

Let $\sigma$ be a qubit density matrix with determinant $D$. As the eigenvalues of $\sigma$ are completely determined by $D$, the entropy of $\sigma$ is a function of $D$. Using the well-known integral representation
\beq
\log x = \int_0^\infty \!dt\, \Bigl( \frac{1}{t+1} - \frac{1}{t+x} \Bigr),\enskip x > 0
\eeq
of the logarithm we easily obtain
\beq
\s S(\sigma) = f(D) := \int_0^\infty \!dt\, \frac{D(2t+1)}{(t+1)(t^2+t+D)},\enskip 0 < D < \tfrac{1}{4}.
\label{f}
\eeq
It is obvious from this expression that the function $f$ can be extended to a continuous function on $\Rl^+$ that is smooth on $]0,\infty[$. The function $f$ is monotonically increasing and satisfies the following inequality
\beq
f(a^2 x + b^2 y) \le a f(x) + b f(y),\enskip 0 \le a,b \le 1 \text{ and } x,y \in \Rl^+.
\eeq
The proof of this inequality is rather tedious. It relies on the inequality $f(x) \le 2x f'(x)$ for $x \in \Rl^+$ which can be verified by explicitly performing the integration in~(\ref{f}).

We now apply this inequality with the choices
\beq
a = p_1 + p_2,\enskip b = p_1 + p_3
\eeq
and for $x$ and $y$ we choose the determinants of the density matrices appearing in the right hand side of~(\ref{bound}):
\beq
x = \frac{p_1 p_2}{(p_1 + p_2)^2}\, (1 - \s F(\rho_1;\rho_2)),\enskip y = \frac{p_1 p_3}{(p_1 + p_3)^2}\, (1 - \s F(\rho_1;\rho_3)).
\eeq
The proof then follows from the monotonicity of $f$ and the concavity of fidelity.
\end{proof}

\begin{proposition}
\label{propo1}
Let $\c E_3 = \{(p_i,\rho_i)\}$  be an ensemble consisting of three states. Then the Holevo quantity~\eqref{chi} is bounded by the entropy of the Hadamard product between the matrices $\s C_{\sqrt{\s F}}(\c E_3)$, defined in~(\ref{CrootFid}) and $\s T_b = [b + (1-b)\delta_{ij}]_{ij}$, with $0 \le b \le \frac{1}{2}$, 
\beq
\chi(\c E_3) \le \s S( \s C_{\sqrt{\s F}}(\c E_3) \circ \s{T}_b).
\label{pro1}
\eeq
\end{proposition}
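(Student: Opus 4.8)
The plan is to realize the matrix $M_b := \s C_{\sqrt{\s F}}(\c E_3) \circ \s T_b$ directly as the Gram matrix of a joint purification of the ensemble, and then to invoke the bound of~\cite{RFZ10} recalled in~\eqref{minSC}, according to which the entropy of the Gram matrix of \emph{any} joint purification of $\{(p_i,\rho_i)\}$ is an upper bound for $\chi(\c E_3)$. Thus it suffices to exhibit vectors $\ket{\Phi_1},\ket{\Phi_2},\ket{\Phi_3}$ in a common space $\c H_1 \otimes \c H_2$ with $\tr_2 \proj{\Phi_i} = p_i \rho_i$, $\bra{\Phi_i}\Phi_i\> = p_i$, and $\bra{\Phi_i}\Phi_j\> = b\,\sqrt{p_i p_j}\,\sqrt{\s F_{ij}}$ for $i \ne j$. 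No appeal to the positivity of $\s E_3$ is then needed, since a Gram matrix is automatically positive.

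The construction I would use attaches a flag register to the ancilla, with mutually orthogonal flags $\ket{12},\ket{13},\ket{23}$ labelling the three pairs. For each pair $\{i,j\}$, let $\ket{\chi_i^{ij}}$ and $\ket{\chi_j^{ij}}$ be Uhlmann-optimal purifications of $\rho_i$ and $\rho_j$ in a common ancilla, so that $\bra{\chi_i^{ij}}\chi_j^{ij}\> = \sqrt{\s F_{ij}} \ge 0$. Since every index $i$ belongs to exactly two pairs $\{i,j\}$ and $\{i,k\}$, I would set
\beq
\ket{\Phi_i} = \sqrt{p_i}\,\frac{1}{\sqrt 2}\bigl( \ket{\chi_i^{ij}}\otimes\ket{ij} + \ket{\chi_i^{ik}}\otimes\ket{ik} \bigr).
\eeq
Each branch has marginal $p_i\rho_i$, so $\tr_2\proj{\Phi_i} = p_i\rho_i$ and $\bra{\Phi_i}\Phi_i\> = p_i$. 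Because distinct pairs carry orthogonal flags, $\bra{\Phi_i}\Phi_j\>$ receives a contribution only from the single flag $\ket{ij}$ shared by the two states, giving $\bra{\Phi_i}\Phi_j\> = \tfrac12\sqrt{p_i p_j}\,\bra{\chi_i^{ij}}\chi_j^{ij}\> = \tfrac12\sqrt{p_i p_j}\sqrt{\s F_{ij}}$. The Gram matrix of $\{\ket{\Phi_i}\}$ is therefore exactly $M_{1/2}$, which settles the endpoint $b = \tfrac12$.

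To reach an arbitrary $0 \le b \le \tfrac12$ I would dilute each purification with a private, mutually orthogonal component, replacing $\ket{\Phi_i}$ by $\sqrt{t}\,\ket{\Phi_i} + \sqrt{1-t}\,\sqrt{p_i}\,\ket{\psi_i}\otimes\ket{i}$, where $\ket{\psi_i}$ is any purification of $\rho_i$ and the flags $\ket{i}$ are orthogonal to each other and to all pair flags. A short computation shows the norm is still $p_i$, the marginal is still $p_i\rho_i$, and every off-diagonal entry is scaled by $t$, so choosing $t = 2b$ produces precisely the Gram matrix $M_b$. The claimed inequality $\chi(\c E_3) \le \s S(M_b)$ then follows at once from~\eqref{minSC}.

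The crux of the argument, and what I expect to be the main obstacle in any naive attempt, is the \emph{simultaneous} realization of all three shrunk overlaps: it is exactly the triangle-type constraint among the relative unitaries (the obstruction already noted for~\eqref{fid3}) that prevents reaching $b = 1$. The flag register circumvents this by placing the three pairwise alignments in mutually orthogonal sectors, so that they no longer compete; the equal weighting of the two sectors attached to each state is what caps the attainable shared weight at $\tfrac12$ and thereby explains the range $0 \le b \le \tfrac12$. What remains is routine verification: that the branch marginals sum correctly and that each shared overlap can be taken real and non-negative, both of which follow directly from Uhlmann's theorem.
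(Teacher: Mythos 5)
Your proposal is correct, but it takes a genuinely different route from the paper's proof. The paper never constructs purifications: it builds the tripartite operator $\omega_{123}$ with diagonal blocks $|ii\>\<ii|\otimes p_i\rho_i$ and off-diagonal blocks $|ii\>\<jj|\otimes\tfrac{1}{2}\sqrt{p_ip_j}\sqrt{\rho_i\rho_j}$, establishes its positivity by exhibiting it as a partial trace of the pairwise direct sum \eqref{sumblock}, whose $2d\times 2d$ blocks \eqref{positiveblock} are shown positive via the criterion \eqref{+def} and a C*-norm computation (a step that requires invertible $\rho_i$, Lemma~\ref{lem1}, and a continuity limit), and then applies strong subadditivity in the form \eqref{SSA} directly. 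Your flag construction does the analogous work at the vector level: positivity is automatic for a Gram matrix, no faithfulness assumption or limiting argument is needed, and strong subadditivity enters only implicitly through the cited bound of \cite{RFZ10}. The essential idea is the same in both arguments --- each state splits its weight equally over the two pairs containing it, which is exactly why $b=\tfrac{1}{2}$ is the threshold: the paper implements the splitting through the factor $\tfrac{1}{2}$ in the direct sum over pairs, you through the $1/\sqrt{2}$ superposition over orthogonal flags. A bonus of your formulation is that it generalizes verbatim to $K$ states, each state splitting over its $K-1$ pairs, giving $\chi(\c E_K)\le \s S(\s C_{\sqrt{\s F}}(\c E_K)\circ\s T_b)$ for $0\le b\le\tfrac{1}{K-1}$.

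One point needs tightening. You justify the final step by \eqref{minSC}, but \eqref{minSC} parametrizes joint purifications by \emph{unitaries} acting on an ancilla of fixed dimension, whereas your vectors $\ket{\Phi_i}$ live in an enlarged ancilla (pair ancilla tensor flag register); their overlaps have the form $\tr\rho_j^{1/2}C_{ij}\,\rho_i^{1/2}$ with $C_{ij}$ a contraction, not necessarily a unitary, so your Gram matrix need not be of the form \eqref{cor}. What you actually need is the bound of \cite{RFZ10} in its general form --- the entropy of the Gram matrix of \emph{any} joint purification upper-bounds $\chi$ --- which is how the paper states it in its abstract, and which in any case follows in one line from monotonicity of the Holevo quantity under the partial-trace channel:
\beq
\chi\bigl(\{(p_i,\rho_i)\}\bigr) \le \chi\bigl(\{(p_i,\proj{\phi_i})\}\bigr) = \s S\Bigl(\sum_i p_i\proj{\phi_i}\Bigr) = \s S\bigl(\s C_{\sqrt{\s F}}(\c E_3)\circ\s T_b\bigr),
\eeq
where $\ket{\phi_i}=\ket{\Phi_i}/\sqrt{p_i}$ and the last equality holds because $\sum_i p_i\proj{\phi_i}$ and the Gram matrix of the $\ket{\Phi_i}$ share the same non-zero spectrum. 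With that substitution your argument is complete.
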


\begin{proof}
The way of reasoning is similar to that given in \cite{RFZ10}. The first stage of the proof requires finding a suitable three partite state $\omega_{123}$. In the second stage, the strong sub-additivity of von Neumann entropy is employed. To obtain the Holevo quantity on one side of the inequality, the diagonal blocks of $\omega_{123}$ should read $|ii\>\<ii|\otimes p_i\rho_i$. Then, the left hand side of the strong sub-additivity relation, written in the form \cite{R02, nielsen}
\beq
\s S(\omega_3) + \s S(\omega_1) - \s S(\omega_{23}) \le \s S(\omega_{12}),
\label{SSA}
\eeq
leads to the Holevo quantity of the ensemble $\c E_3$. Here we use the notation in which, for instance, $\omega_{12}$ denotes the partial trace of $\omega_{123}$ over the third subsystem. The state $\omega_{12}$ depends on off-diagonal blocks of $\omega_{123}$ and its entropy provides an upper bound for the Holevo quantity.

According to this scheme the proof of Proposition~\ref{propo1} goes as follows.
Assume that the states $\rho_i$ are invertible and consider the state
\beq
\Omega = \bigoplus_{\{i,j\}_{i,j=1,2,3}} \sum_{\ell,m \in \{i,j\}} |\ell\ell\>\<mm|\otimes\frac{1}{2}\sqrt{p_\ell p_m} \sqrt{\rho_\ell \rho_m}.
\label{sumblock}
\eeq
Because of Lemma~\ref{lem1} we may take limits for the general case. 
This matrix is positive, since it is built by permutations of block matrices of size $2d$
\beq
X = \begin{bmatrix}
p_i \rho_i &\sqrt{p_ip_j} \sqrt{\rho_i\rho_j} \\
\sqrt{p_ip_j} \sqrt{\rho_j\rho_i} &p_j\rho_j
\end{bmatrix}.
\label{positiveblock}
\eeq

To show that $X$ is positive we use~(\ref{+def}). We write $\sqrt{\rho_i \rho_j} = \sqrt{\rho_i} u \sqrt{\rho_j}$ and so $u = \rho_i^{-1/2} \sqrt{\rho_i \rho_j} \rho_j^{-1/2}$. We still have to show that $\norm u \le 1$. This follows from the C*-property $\norm{X^*}^2 = \norm X^2 = \norm{X^*X}$ of the norm:
\beq
\begin{split}
\Bigl\Vert \frac{1}{\sqrt{\rho_i}}\, \sqrt{\rho_i \rho_j}\, \frac{1}{\sqrt{\rho_j}} \Bigr\Vert^2 
&= \Bigl\Vert \frac{1}{\sqrt{\rho_i}}\, \sqrt{\rho_i \rho_j}\, \frac{1}{\rho_j}\, \sqrt{\rho_j \rho_i}\, \frac{1}{\sqrt{\rho_i}} \Bigr\Vert \\
&= \Bigl\Vert \frac{1}{\sqrt{\rho_i}}\, \frac{1}{\sqrt{\rho_j}}\, \sqrt{\rho_j}\,\sqrt{\rho_i \rho_j}\, \frac{1}{\sqrt{\rho_j}}\,\frac{1}{\sqrt{\rho_j}}\, \sqrt{\rho_j \rho_i}\, \sqrt{\rho_j}\, \frac{1}{\sqrt{\rho_j}}\, \frac{1}{\sqrt{\rho_i}} \Bigr\Vert \\
&= \Bigl\Vert \frac{1}{\sqrt{\rho_i}}\, \frac{1}{\sqrt{\rho_j}}\, \Bigl( \sqrt{\sqrt{\rho_j}\, \rho_i \sqrt{\rho_j}} \Bigr)^2\, \frac{1}{\sqrt{\rho_j}}\, \frac{1}{\sqrt{\rho_i}} \Bigr\Vert \\
&= \Bigl\Vert \frac{1}{\sqrt{\rho_i}}\, \rho_i\, \frac{1}{\sqrt{\rho_i}} \Bigr\Vert = 1.
\end{split}
\label{normeq}
\eeq
Therefore, since $\Omega$ in (\ref{sumblock}) is positive, also is its partial trace:
\beq
\sum_{i,j=1}^3 |ii\>\<jj| \otimes \frac{1}{2}\, \sqrt{p_ip_j} \sqrt{\rho_i\rho_j} + \sum_{i=1}^3 |ii\>\<ii| \otimes \frac{1}{2}\, p_i\rho_i\;,
\eeq
which we use as the ansatz state $\omega_{123}$. The reduced state $\omega_{12}$ is then given by
\beq
\frac{1}{2}\, \sum_{i,j} (\sqrt{p_i p_j} + \delta_{ij} p_i) \tr(\sqrt{\rho_i \rho_j}) \ket{ii}\!\bra{jj}\;,
\eeq
which has the same entropy as the correlation matrix $\s C_{\sqrt{\s F}}(\c E_3) \circ T_{1/2}$. This completes the proof for the case $b=1/2$.

To prove~(\ref{pro1}) for $0 \le b < \frac{1}{2}$ one can simply multiply the off-diagonal matrices of~(\ref{sumblock}) by a positive number smaller than 1, without changing the proof.
\end{proof}

For the special case of three qubit states the range of $b$ can be pushed up to $\frac{1}{\sqrt 3}$.  Moreover, for two pure and one mixed qubit states Conjecture~\ref{conj} is found to hold for the uniform probability distribution, $p_j = \frac{1}{3}$. These proofs contain elementary but  lengthy computations and can be found in~\cite{PhD}.

\section{General ensembles}
\label{Karb} 

Clearly, the positivity of $\s E_K(\{\rho_i\})$ for $K$ states would provide a lot of information on relations between fidelities. Unfortunately, as shown in Conjecture~\ref{thm1}, this matrix already fails to be positive for four states. This somehow points at off-diagonal elements being too large. There are several possibilities to improve this situation such as scaling down the off-diagonal entries. Doing this considerably weakens the positivity result, e.g., if $\s F_{12} = 1$, then the $(1,2)$ principal sub-matrix of a rescaled $\s E_K(\{\rho_i\})$ has no longer an eigenvalue $0$, while we still have $\rho_1 = \rho_2$, and hence $\s F_{1k} = \s F_{2k} $. In particular, positivity of such a rescaled matrix of fidelities with one of the fidelities equal to 1 does not collapse to positivity of a rescaled matrix of fidelities with one row and column removed. A better way to decrease off-diagonals is therefore to consider a matrix of powers of fidelities. 

The following proposition introduces the family $\s E^\alpha_K$ of fidelity matrices and sets a lower bound on $\alpha$ if the matrix is to be positive for any ensemble. 

\begin{proposition}
Suppose that for any set $\{\rho_i\}$ consisting of $K$ states the matrix
\beq
\s E^\alpha_K(\{\rho_i\}) := \bigl[ \s F^\alpha_{ij} \bigr]_{ij} = \bigl[ \bigl( \s F(\rho_i;\rho_j) \bigr)^\alpha \bigr]_{ij}
\eeq
is positive definite. Then $\alpha \ge 1$.
\end{proposition}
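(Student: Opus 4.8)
The plan is to prove the contrapositive: assuming $0<\alpha<1$ I will exhibit an ensemble for which $\s E^\alpha_K(\{\rho_i\})$ has a negative eigenvalue, so positivity fails. (The cases $\alpha\le 0$ are immediate: $\alpha=0$ gives the all-ones matrix, which is singular already for $K\ge 2$, and $\alpha<0$ produces diverging off-diagonal entries for nearly orthogonal states.) The states I would use are pure real qubit (``rebit'') states
\[
\ket{\psi_r}=\cos\theta_r\,\ket0+\sin\theta_r\,\ket1 ,
\]
parametrised by angles $\theta_r$ on the circle. For such states the characterisation~\eqref{3} gives $\s F_{rs}=\s F(\rho_r;\rho_s)=|\langle\psi_r|\psi_s\rangle|^2=\cos^2(\theta_r-\theta_s)$, so that
\[
\bigl[\s F^\alpha_{rs}\bigr]=\bigl[g(\theta_r-\theta_s)\bigr],\qquad g(\phi):=|\cos\phi|^{2\alpha}.
\]
Thus $\s E^\alpha_K$ is the matrix of a translation-invariant kernel on $\Rl/\pi\Ir$, and the whole question collapses to whether $g$ is a positive-definite function on that circle.

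The reduction that makes the problem tractable is Bochner's theorem in its circle form: $\bigl[g(\theta_r-\theta_s)\bigr]\ge 0$ for \emph{every} finite family of angles if and only if every Fourier coefficient of $g$ is non-negative. Hence it suffices to produce a single strictly negative Fourier coefficient of $|\cos\phi|^{2\alpha}$. Writing the even, $\pi$-periodic function as $g(\phi)=\sum_{k\ge 0}a_k\cos(2k\phi)$, I would use the classical expansion
\[
|\cos\phi|^{2\alpha}=\frac{\Gamma(2\alpha+1)}{2^{2\alpha}\,\Gamma(\alpha+1)^2}\Bigl[\,1+2\sum_{k\ge 1}\Bigl(\prod_{j=1}^{k}\frac{\alpha-j+1}{\alpha+j}\Bigr)\cos(2k\phi)\,\Bigr],
\]
which follows either from a one-line integration-by-parts recursion for the $a_k$ or from a Beta-integral evaluation. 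The key observation is the $k=2$ coefficient, whose sign is that of $\frac{\alpha}{\alpha+1}\cdot\frac{\alpha-1}{\alpha+2}$; for every $\alpha\in(0,1)$ this is strictly negative. Consequently $a_2<0$, $g$ is not positive definite, and some finite angle set yields $\s E^\alpha_K\not\ge 0$. Concretely I would take equispaced angles $\theta_r=\pi r/K$, $r=0,\dots,K-1$, turning $\s E^\alpha_K$ into a real symmetric circulant whose eigenvalues are the discrete cosine sums $\sum_r g(\pi r/K)\cos(2\pi kr/K)$; these converge to $a_k$ as $K\to\infty$, so for $K$ large enough the eigenvalue in the $k=2$ sector is negative. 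This contradicts the hypothesis and forces $\alpha\ge 1$.

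I expect the main obstacle to be pinning down the sign of the $k=2$ Fourier coefficient uniformly in $\alpha\in(0,1)$, since this is exactly where the threshold $\alpha=1$ is generated: the factor $(\alpha-1)$ vanishes precisely at $\alpha=1$, consistent with the fact that $\cos^2\phi=\tfrac12+\tfrac12\cos 2\phi$ is genuinely positive definite. A second point requiring care is that $a_2\to 0^-$ as $\alpha\uparrow 1$, so the negative eigenvalue is faint and the number of states needed grows without bound; for any fixed small $K$ a power slightly below $1$ still yields a positive matrix. The statement must therefore be read with the ensemble size unrestricted, i.e.\ the conclusion is that no exponent $\alpha<1$ keeps $\s E^\alpha_K$ positive for ensembles of \emph{every} size.

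I note in passing that a purely local alternative—placing three nearly coincident states along a Bures geodesic and expanding $(1-\s F_{ij})^\alpha$ to second order, where the concavity of $x\mapsto x^\alpha$ for $\alpha<1$ supplies the adverse sign—only reaches $\alpha<\tfrac12$ and is complicated by quartic corrections to the Bures distance, which is why I would favour the global harmonic-analytic construction above.
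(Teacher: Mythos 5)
Your proof is correct, but it takes a genuinely different route from the paper's. The paper also argues by counterexample, but a far more elementary one: it takes the union of the standard basis $\{e_i\}$ and a mutually unbiased (complex Hadamard) basis $\{f_i\}$ of $\Cx^n$, so that all fidelities vanish within each basis and equal $1/n$ across the two bases, and evaluates the quadratic form of $\s E^\alpha_K$ on the single alternating test vector $\omega_i=(-1)^i$. This yields $2n - 2n^2\cdot n^{-\alpha}$, which is negative precisely when $\alpha<1$ (the paper phrases this as a limit $n\to\infty$, though any fixed $n\ge 2$ already forces the conclusion). Your harmonic-analytic argument --- rebit ensembles on the circle, the Bochner/Herglotz reduction, the classical cosine expansion of $\abs{\cos\phi}^{2\alpha}$ with its strictly negative $k=2$ coefficient for $\alpha\in(0,1)$, and the circulant realization with equispaced angles --- is heavier machinery, but it buys something the paper's proof does not: your counterexamples live in fixed Hilbert-space dimension $2$ (real states, even), so the failure of positivity for $\alpha<1$ is not an artifact of letting the dimension grow with $K$; it also displays the threshold transparently, as the factor $(\alpha-1)$ in the second Fourier coefficient, matching the fact that $\cos^2\phi=\tfrac12+\tfrac12\cos 2\phi$ is genuinely positive definite.

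One aside in your write-up is wrong, although it does not affect the validity of the core argument. You claim the negative eigenvalue is faint as $\alpha\uparrow 1$, that the required number of states grows without bound, and that for any fixed small $K$ a power slightly below $1$ still yields positive matrices. In fact your own circulant at $K=4$, i.e.\ the equispaced rebits $\theta_r=\pi r/4$, which are exactly the states $\ket{0},\ket{+},\ket{1},\ket{-}$, has the eigenvalue $\lambda_2 = 1 - 2^{1-\alpha}$, negative for \emph{every} $\alpha<1$; this four-state configuration is precisely the paper's construction at $n=2$ (two mutually unbiased qubit bases). So $K=4$ suffices uniformly in $\alpha\in(0,1)$, no limit over $K$ is needed, and the proposition holds for any fixed $K\ge 4$ (for larger $K$, pad with arbitrary extra states and use that a principal submatrix of a positive matrix is positive). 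The correct boundary for your interpretive remark is $K\le 3$: by Proposition~\ref{thm1}, $\alpha=\tfrac12$ keeps $\s E^{\alpha}_3$ positive for every triple of states, so the statement would indeed fail if read with $K=3$ fixed --- but not for $K\ge 4$.
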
 

\begin{proof}
The proof uses a particular choice of ensemble. We choose $2n + 2$ pure states determined by $\{|\varphi_i\rangle \}$ in $\Cx^n$ and consider the limit of large $n$. Let $H$ be a complex unitary Hadamard matrix~\cite{TZ06} of dimension $n$ so that
\beq
\abs{H_{ij}} = \frac{1}{\sqrt n},\ i,j=1,\dots ,n
\eeq
Define now
\beq
|\varphi_{2i-1} \rangle  = e_i \enskip\text{and}\enskip | \varphi_{2i+2}\rangle  = f_i,\enskip i = 1,2,\ldots,n,
\eeq
where $\{e_i\}$ is the standard basis in $\Cx^n$ and $\{f_i\}$ is the mutually unbiased basis of columns in $H$. We then have
\beq
\begin{split}
&\norm{\varphi_i} = 1, \\
&\< \varphi_i \,| \, \varphi_j \> = 0 \enskip\text{when $i + j$ is even and larger than 0} \\
&\abs{\< \varphi_i \, | \, \varphi_j \>} = \frac{1}{\sqrt n} \enskip\text{otherwise}.
\end{split}
\eeq
Let us express the positivity of
\beq
\bra\omega \s E^\alpha_K(\{\rho_i\})\, \omega\>
\eeq
where $\omega$ is the vector with entries $\omega_i = (-1)^i$. This leads to
\beq
2n - n^2\, \frac{1}{n^\alpha} \ge 0
\eeq
which implies in the limit $n \to \infty$ that $\alpha \ge 1$.
\end{proof}

With this result in hands, we look for the connection with the Holevo quantity.

\begin{proposition}
\label{prop:fid:square}
Let $\c E_K = \{(p_i, \ket{\varphi_i})\}$ be an ensemble of $K$ pure states. Then, 
\begin{enumerate}
\item
$\s C_{\s F}(\c E_K) := \bigl[ \sqrt{p_i p_j} \abs{\!\bra{\varphi_i} \varphi_j \>}^2 \bigr]_{ij}$ is positive
\item
the corresponding Holevo quantity $\chi(\c E_K)$ is upper-bounded by the entropy of $\s C_{\s F}(\c E_K)$,
\beq
\chi(\c E_K) \le \s S\bigl( \s C_{\s F}(\c E_K) \bigr).
\eeq
\end{enumerate}
\end{proposition}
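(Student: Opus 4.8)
The plan is to prove the two statements in tandem, because for pure states the squared overlaps $\abs{\bra{\varphi_i}\varphi_j\>}^2$ are themselves the overlaps of a \emph{second} ensemble of pure states, obtained by tensoring. First I would observe that for pure states the fidelity is simply $\s F_{ij} = \abs{\bra{\varphi_i}\varphi_j\>}^2$, so that $\s C_{\s F}(\c E_K) = \bigl[ \sqrt{p_ip_j}\, \abs{\bra{\varphi_i}\varphi_j\>}^2 \bigr]_{ij}$. The key identity is
\beq
\abs{\bra{\varphi_i}\varphi_j\>}^2 = \bra{\varphi_i}\varphi_j\> \, \overline{\bra{\varphi_i}\varphi_j\>} = \bra{\varphi_i\otimes\overline{\varphi_i}}\, \varphi_j\otimes\overline{\varphi_j}\>,
\eeq
where $\ket{\overline{\varphi}}$ denotes the complex conjugate of $\ket{\varphi}$ in the fixed basis. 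Setting $\ket{\psi_i} := \ket{\varphi_i}\otimes\ket{\overline{\varphi_i}} \in \Cx^d\otimes\Cx^d$, each $\ket{\psi_i}$ is again a normalized vector and $\abs{\bra{\varphi_i}\varphi_j\>}^2 = \bra{\psi_i}\psi_j\>$. Hence $\s C_{\s F}(\c E_K) = \bigl[ \sqrt{p_ip_j}\, \bra{\psi_i}\psi_j\> \bigr]_{ij}$ is exactly the Gram matrix of the subnormalized vectors $\{\sqrt{p_i}\ket{\psi_i}\}$, which is positive semi-definite by construction. This settles statement~1.

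For statement~2, the point is that $\s C_{\s F}(\c E_K)$ is the correlation matrix of the \emph{derived pure-state ensemble} $\c E_K' := \{(p_i, \ket{\psi_i})\}$. As recalled in the introduction, for any pure-state ensemble the correlation matrix $\bigl[\sqrt{p_ip_j}\bra{\psi_i}\psi_j\>\bigr]_{ij}$ and the averaged state $\rho' = \sum_i p_i \proj{\psi_i}$ share the same nonzero spectrum, so they have equal von Neumann entropy; and for a pure-state ensemble the Holevo quantity coincides with $\s S(\rho')$. Therefore $\s S\bigl(\s C_{\s F}(\c E_K)\bigr) = \s S(\rho') = \chi(\c E_K')$. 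It then remains to compare $\chi(\c E_K')$ with $\chi(\c E_K)$. The natural tool is the monotonicity of the Holevo quantity under quantum operations: the map $\ket{\varphi}\mapsto\ket{\varphi}\otimes\ket{\overline{\varphi}}$ is not itself completely positive, but the \emph{reverse} direction is, since tracing out the second factor of $\proj{\psi_i} = \proj{\varphi_i}\otimes\proj{\overline{\varphi_i}}$ recovers $\proj{\varphi_i}$. Thus the partial trace $\tr_2$ is a CPTP map sending $\c E_K'$ to $\c E_K$, and monotonicity of $\chi$ under CPTP maps gives $\chi(\c E_K) \le \chi(\c E_K')$. Chaining this with the equality above yields
\beq
\chi(\c E_K) \le \chi(\c E_K') = \s S(\rho') = \s S\bigl(\s C_{\s F}(\c E_K)\bigr),
\eeq
which is the claimed bound.

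The main obstacle I anticipate is getting the direction of the data-processing inequality right. It is tempting to think of squaring the overlaps as a contraction that should decrease entropy, but in fact it is the \emph{embedding} $\c E_K \to \c E_K'$ that one wants, and one must verify that the physically available CPTP map runs from $\c E_K'$ \emph{down} to $\c E_K$ (the partial trace), so that the inequality $\chi(\c E_K)\le\chi(\c E_K')$ comes out in the useful direction. A secondary technical point is the choice of conjugation: one must fix a basis and use antilinear conjugation $\ket{\varphi}\mapsto\ket{\overline{\varphi}}$ consistently so that $\bra{\psi_i}\psi_j\> = \bra{\varphi_i}\varphi_j\>\,\overline{\bra{\varphi_i}\varphi_j\>}$ holds exactly as an inner product of genuine vectors; once this is set up, both the positivity and the entropy identity are immediate.
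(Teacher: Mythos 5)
Your proposal is correct, and for part~2 it takes a genuinely different route from the paper. For part~1 the two arguments are the same fact in different clothing: the paper writes $\s C_{\s F}(\c E_K) = \s G(\c E_K)\circ \s G(\c E_K)^*$ and invokes the Schur product theorem, whereas you exhibit the matrix directly as the Gram matrix of the vectors $\sqrt{p_i}\,\ket{\varphi_i}\otimes\ket{\overline{\varphi_i}}$ --- which is exactly how the Schur product theorem is proved, so nothing separates the two beyond packaging. For part~2 the paper follows the strong-subadditivity scheme of Proposition~\ref{propo1}: it builds the tripartite ansatz $\omega_{123}=\sum_{ij}|ii\>\<jj|\otimes\sqrt{p_ip_j}\,\<\varphi_i|\varphi_j\>\,|\varphi_i\>\<\varphi_j|$, certifies its positivity by presenting it as a partial trace of a Gram matrix $\Omega$ built from the very vectors $\ket{\varphi_i\otimes\overline{\varphi}_i}$ you introduce, and then applies the strong sub-additivity relation~(\ref{SSA}). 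You instead argue at the ensemble level: $\s C_{\s F}(\c E_K)$ is the correlation matrix of the doubled pure-state ensemble $\c E_K'=\{(p_i,\ket{\varphi_i}\otimes\ket{\overline{\varphi_i}})\}$, its entropy equals $\chi(\c E_K')$ because for a pure-state ensemble the Gram matrix and the average state are isospectral up to zero eigenvalues, and the partial trace over the second factor is a CPTP map carrying $\c E_K'$ onto $\c E_K$, so monotonicity of $\chi$ under CPTP maps gives $\chi(\c E_K)\le\chi(\c E_K')$; you also correctly identified and resolved the one delicate point, namely the direction of the data-processing inequality. Since monotonicity of relative entropy and strong subadditivity are equivalent theorems, the underlying inequality is the same in both proofs; your version is shorter and has the conceptual advantage of exhibiting $\s C_{\s F}(\c E_K)$ as the correlation matrix~(\ref{cor}) associated with the particular product purifications $\ket{\varphi_i}\otimes\ket{\overline{\varphi_i}}$ of the states $\proj{\varphi_i}$, so that part~2 can even be read off directly from the general bound~(\ref{minSC}) of~\cite{RFZ10}. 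What the paper's tripartite construction buys in exchange is uniformity: the same ansatz-state scheme extends to mixed-state ensembles (Proposition~\ref{propo1} and the qubit proposition of Section~\ref{Karb}), where your doubling trick has no direct analogue.
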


\begin{proof}
Part \emph{1.} of the proposition is easily verified by defining the Gram matrix of the vectors $\sqrt[4]{p_i}\ket{\varphi_i}$,
\begin{equation*}
\s G(\c E_K) = \bigl[ \sqrt[4]{p_i p_j}\bra{\varphi_i}\varphi_j\> \bigr]_{ij},
\end{equation*}
and noting that $\s C_{\s F}(\c E_K)$ is equal to the Hadamard product between $\s G(\c E_K)$ and $\s G(\c E_K)^*$, i.e, $\s C_{\s F}(\c E_K) = \s G(\c E_K) \circ \s G(\c E_K)^*$.

The second part of the proposition follows the general scheme described in the beginning of proof of Proposition~\ref{propo1}. The ansatz multi-partite state $\omega_{123}$ is defined by
\beq
\omega_{123} := \sum_{ij}|ii\>\<jj|\otimes\sqrt{p_ip_j} \<\varphi_i|\varphi_j\> |\varphi_i\>\<\varphi_j|.
\eeq
Positivity of $\omega_{123}$ can be assured by realizing that such state is obtained as the partial trace of the Gram matrix $\Omega$
\beq
\Omega := \sum_{ij} |ii\>\<jj|\otimes\sqrt{p_ip_j} |\varphi_i \otimes \bar{\varphi}_i\> \<\varphi_j \otimes \bar{\varphi}_j|;
\eeq
where we introduced the transformation $\ket{\varphi} \mapsto \ket{\overline{\varphi}}$, which is realized by taking the complex conjugate of all coordinates of the state in a given basis.

Taking the state $\omega_{123}$ as an initial multi-partite state, and using the strong sub-additivity relation in the form~(\ref{SSA}) leads to the desired result.
\end{proof}

The extension of this proposition to an arbitrary ensemble $\c E_K = \{(p_i,\rho_i)\}$ of $K$ mixed states is delicate. Given Proposition~\ref{thm1}, $\s C_{\s F}(\c E_3)$ for three states is positive, as $\s C_{\s F}(\c E_3) = \s C_{\sqrt{\s F}}(\c E_3)\circ \s C_{\sqrt{\s F}}(\c E_3)$. For four mixed states positivity seems also to hold, but for ensembles with $K \ge 5$ mixed states it fails. See Figure~\ref{fig:smallestEval}.

\begin{figure}[h!tf]
\begin{center}
\scalebox{.75}{\includegraphics{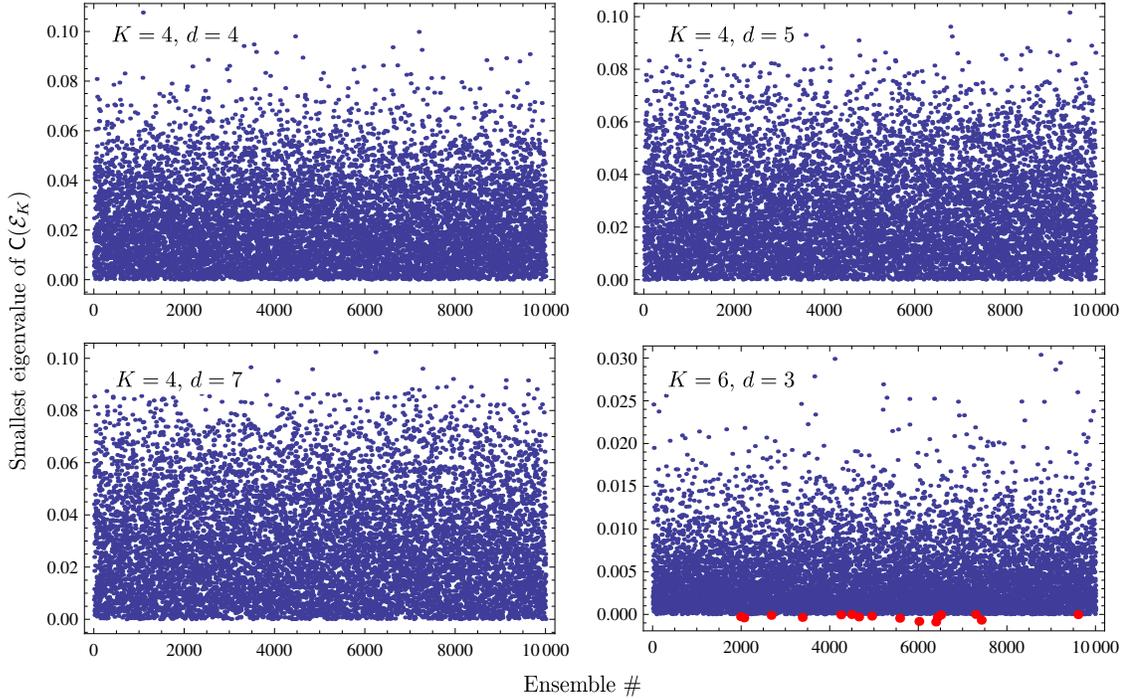}}
\end{center}
\caption{Testing the positivity of matrices of fidelities. Numerical data for ensembles of length $K$ of $d$ dimensional states 
showing that in general the fidelity matrix $\s C_{\s F}(\c E_K)$ is not positive.}
\label{fig:smallestEval}
\end{figure}

In fact, the positivity of $\s C_{\s F}$ is dimension dependent. Mixed qubit states are targeted by the next proposition.

\begin{proposition}
For any ensemble $\c E_K = \{(p_i,\rho_i)\}$ of $K$ qubit states
\begin{enumerate}
\item
the correlation matrix $\s C_{\s F}(\c E_K)$ is positive semi-definite
\item
the corresponding Holevo quantity is bounded by the entropy of $\s C_{\s F}(\c E_K)$
\beq
\chi(\c E_K) \le \s S(\s C_{\s F}(\c E_K)).
\label{pro4}
\eeq
\end{enumerate}
\end{proposition}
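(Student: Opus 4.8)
The plan is to reduce everything to the special structure of qubit fidelity. For two qubit states one has the closed form
\beq
\s F(\rho_i;\rho_j) = \tr(\rho_i\rho_j) + 2\sqrt{\det\rho_i\,\det\rho_j},
\eeq
which I would first record; it follows directly from the Bloch representation, or from~(\ref{3}) together with the fact that a qubit has only two eigenvalues. Substituting this into $\s C_{\s F}(\c E_K)=\bigl[\sqrt{p_ip_j}\,\s F_{ij}\bigr]_{ij}$ splits the matrix as
\beq
\s C_{\s F}(\c E_K) = \bigl[\sqrt{p_ip_j}\,\tr(\rho_i\rho_j)\bigr]_{ij} + 2\,\ket{v}\!\bra{v},\quad v_i=\sqrt{p_i\det\rho_i}.
\eeq
The first term is the Gram matrix of the vectors $\sqrt{p_i}\,\rho_i$ in the Hilbert--Schmidt inner product and is therefore positive semi-definite, while the second term is manifestly rank one and positive. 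This settles part~1.

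For part~2 I would follow the ansatz-plus-strong-subadditivity scheme of Propositions~\ref{propo1} and~\ref{prop:fid:square}, whose goal is a positive tripartite operator whose diagonal blocks are $\proj{ii}\otimes p_i\rho_i$. Guided by the two pieces above, I would take the $2\times 2$ off-diagonal blocks
\beq
M_{ij} = \sqrt{p_ip_j}\Bigl(\rho_i\rho_j + \sqrt{\det\rho_i\,\det\rho_j}\,\idty\Bigr)
\eeq
and set $\omega_{123}=\sum_{ij}\ket{ii}\!\bra{jj}\otimes M_{ij}$. Three facts then have to be checked. First, the Cayley--Hamilton identity for a qubit, $\rho^2=\rho-\det\rho\,\idty$, gives $M_{ii}=p_i\rho_i$, so the diagonal blocks are correct. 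Second, $\tr M_{ij}=\sqrt{p_ip_j}\,\s F_{ij}$ by the fidelity formula, so $\omega_{12}=\tr_2\omega_{123}$ equals $\s C_{\s F}(\c E_K)$ as a matrix. Third, $\omega_{123}$ is positive, being the sum of $\bigl[\sqrt{p_ip_j}\,\rho_i\rho_j\bigr]_{ij}=\bigl[A_iA_j^*\bigr]_{ij}$ with $A_i=\sqrt{p_i}\rho_i$ (a Gram-type block matrix, hence positive) and of $\ket{v}\!\bra{v}\otimes\idty$ (positive).

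With $\omega_{123}$ in hand I would compute the reduced states exactly as in the scheme: $\omega_3=\sum_i p_i\rho_i=\rho$, the marginal $\omega_1=\sum_i p_i\proj{i}$ is the classical distribution, and $\omega_{23}=\sum_i \proj{i}\otimes p_i\rho_i$ is block diagonal with entropy $\s S(\omega_1)+\sum_i p_i\s S(\rho_i)$. Inserting these into the strong sub-additivity relation~(\ref{SSA}) makes the left-hand side collapse to $\s S(\rho)-\sum_i p_i\s S(\rho_i)=\chi(\c E_K)$, while the right-hand side is $\s S(\omega_{12})=\s S(\s C_{\s F}(\c E_K))$, which is exactly~(\ref{pro4}).

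The step I expect to carry the whole argument is the choice of the off-diagonal blocks $M_{ij}$, and it is here that dimension two is indispensable: it is precisely the two-term fidelity formula together with $\rho^2=\rho-\det\rho\,\idty$ that lets the blocks assemble into a positive operator while leaving the diagonal pinned at $p_i\rho_i$. Neither identity survives in higher dimension, which is consistent with the numerical failure of positivity of $\s C_{\s F}$ for $K\ge 5$ reported above; I would therefore not expect this construction to extend, so that the qubit restriction is essential rather than merely convenient.
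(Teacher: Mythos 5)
Your proof is correct and takes essentially the same route as the paper: your ansatz $\omega_{123}$ with off-diagonal blocks $M_{ij}=\sqrt{p_ip_j}\bigl(\rho_i\rho_j+\sqrt{\det\rho_i\,\det\rho_j}\,\idty\bigr)$ is exactly the paper's block matrix $W$ built from $M_i=\sqrt{p_i}\,(\rho_i,\sqrt{\det\rho_i}\,\idty)$, and the strong-subadditivity step is identical. The only cosmetic differences are that you verify positivity by splitting $W$ into a Gram-type part plus a rank-one part (rather than via a single stacked Gram construction), and that you establish part~1 directly from the two-term qubit fidelity formula instead of obtaining it as a partial trace of the positive tripartite operator.
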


\begin{proof}
Construct the positive block matrix $W$  in the following way:
\begin{equation*}
W = \sum_{i,j=1}^K |ii\>\<jj| \otimes M_i M_j^*
\end{equation*}
where $M_i = \sqrt{p_i}(\rho_i,  \sqrt{\det{\rho_i}} \idty)^{\s T}$.
Then the block elements of $W$ read:
\begin{equation*}
W_{ij} = |ii\>\<jj| \otimes \sqrt{p_i p_j} (\rho_i \rho_j + \sqrt{\det{\rho_i \rho_j}} \idty).
\end{equation*}
Now, using the expression for a function of a $2 \times 2$ matrix in terms of its trace and determinant 
\begin{equation*}
\sqrt X = \frac{(X + \sqrt{\det X} \idty)}{\tr{\sqrt X}},
\end{equation*}
one can define the ansatz state
\begin{equation*}
\omega_{123} := W = \sum_{i,j=1}^K |ii\>\<jj| \otimes \sqrt{p_i p_j} \sqrt{\rho_i \rho_j}\; \tr \sqrt{\rho_i \rho_j}.
\end{equation*}
Since $\omega_{123}$ is positive, so is its partial traces. The proof of part~\emph{1.} of the proposition follows from realizing that $\omega_{12} = \s C_{\s F}(\c E_K)$, and therefore positive. The second part follows from the application of the strong sub-additivity  to the tripartite state $\omega_{123}$.
\end{proof}

\section{Multi-state correlations \label{multi}}

Up to this point the correlation matrices used were taking into account only two states correlations. In an ensemble with more than two states it is natural to think on measures for \emph{multi-state correlations}. One can obtain bounds for the Holevo quantity in terms of multi-state correlations by considering special instances of correlation matrices~(\ref{cor}). The unitary matrices $U_i$ can be chosen in such a way that the entries of the correlation matrix on the sub- and super-diagonal are equal to the root fidelities $\sqrt{\s F_{ij}}$.

\begin{proposition}
\label{ppp2}
Consider an ensemble $\c E_K = \{(p_i, \rho_i)\}$ of $K$ faithful states of arbitrary dimension. The Holevo information $\chi( \c E_K)$ is bounded by the exchange entropy $\s S(\sigma)$,
\beq
\chi(\c E_K) \le \s S(\sigma),
\eeq
where the entries of the correlation matrix $\sigma$ are given by:
\beq 
\sigma_{ij} = \sqrt{p_ip_j}\, \tr \sqrt{\rho_j \rho_{j-1}}\, \frac{1}{\rho_{j-1}}\, \sqrt{\rho_{j-1} \rho_{j-2}}\, \frac{1}{\rho_{j-2}} \cdots \frac{1}{\rho_{i+1}}\, \sqrt{\rho_{i+1} \rho_{i}}.
\label{rec2}
\eeq
\end{proposition}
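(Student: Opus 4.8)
The plan is to exhibit $\sigma$ as a genuine correlation matrix~(\ref{cor}) of the ensemble for a judicious choice of purification unitaries $U_i$, and then to read the bound off directly from~(\ref{minSC}). The announced guiding principle is that the $U_i$ should make every sub-diagonal entry equal the root fidelity $\sqrt{\s F_{i,i+1}} = \tr\sqrt{\rho_{i+1}\rho_i}$ (cf.~(\ref{3})); the entries further from the diagonal are then forced to be exactly the telescoping products appearing in~(\ref{rec2}).

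First I would build the unitaries recursively from nearest-neighbour polar decompositions. For each $k$ write $\sqrt{\rho_k}\sqrt{\rho_{k+1}} = W_k\,\bigl|\sqrt{\rho_k}\sqrt{\rho_{k+1}}\bigr|$; since the $\rho_k$ are faithful the operator $\sqrt{\rho_k}\sqrt{\rho_{k+1}}$ is invertible and each $W_k$ is a true unitary. Set $U_1 = \idty$ and $U_i = W_1 W_2\cdots W_{i-1}$, so that for $i<j$ all but the last block of factors cancel and $U_j^* U_i = W_{j-1}^* W_{j-2}^*\cdots W_i^*$.

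The crux is the single operator identity, valid for each $k$,
\beq
\sqrt{\rho_{k+1}\rho_k} = \sqrt{\rho_{k+1}}\,W_k^*\,\sqrt{\rho_k}.
\label{plan:key}
\eeq
I would derive it from $\sqrt{\rho_{k+1}\rho_k} = \sqrt{\rho_{k+1}}\,\bigl(\sqrt{\rho_{k+1}}\rho_k\sqrt{\rho_{k+1}}\bigr)^{1/2}\,\rho_{k+1}^{-1/2}$ --- the right-hand side squares to $\rho_{k+1}\rho_k$ and has non-negative spectrum, hence is the root singled out by Lemma~\ref{lem1} --- using $\bigl(\sqrt{\rho_{k+1}}\rho_k\sqrt{\rho_{k+1}}\bigr)^{1/2} = \bigl|\sqrt{\rho_k}\sqrt{\rho_{k+1}}\bigr|$ together with the polar relation $\bigl|\sqrt{\rho_k}\sqrt{\rho_{k+1}}\bigr| = W_k^*\sqrt{\rho_k}\sqrt{\rho_{k+1}}$. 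Substituting~(\ref{plan:key}) into the product in~(\ref{rec2}), each interior pairing $\sqrt{\rho_k}\,\rho_k^{-1}\,\sqrt{\rho_k} = \idty$ (for $k = j-1,\ldots,i+1$) telescopes away, leaving $\sqrt{\rho_j}\,W_{j-1}^*\cdots W_i^*\,\sqrt{\rho_i} = \sqrt{\rho_j}\,(U_j^* U_i)\,\sqrt{\rho_i}$. Taking the trace and restoring the weights yields $\sigma_{ij} = \sqrt{p_ip_j}\,\tr\rho_j^{1/2}U_j^*U_i\rho_i^{1/2}$ for $i<j$, the diagonal being $p_i$ and the lower triangle fixed by Hermiticity; that is, $\sigma$ is precisely the correlation matrix~(\ref{cor}) of $\c E_K$ for this choice of $U_i$. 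The inequality~(\ref{minSC}) then gives $\chi(\c E_K) \le \s S(\sigma)$.

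I expect the main obstacle to be conceptual rather than computational: guessing the recursive unitaries $U_i = W_1\cdots W_{i-1}$ and the identity~(\ref{plan:key}) that makes the intermediate states cancel. Once these are found the telescoping is automatic, and the only points demanding care are invoking Lemma~\ref{lem1} to pin down the otherwise ambiguous matrix square root and using faithfulness so that the $W_k$ and the inverses $\rho_k^{-1}$ exist. The general, possibly rank-deficient, case would then follow by the continuity argument already used elsewhere in the paper.
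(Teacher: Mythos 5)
Your proposal is correct and takes essentially the same route as the paper's own proof: nearest-neighbour polar decompositions of $\sqrt{\rho_k}\sqrt{\rho_{k+1}}$ whose unitaries are composed recursively into purification unitaries $U_i$, so that $\sigma$ is exhibited as a correlation matrix of the form~(\ref{cor}) and the bound follows from~(\ref{minSC}). Your write-up in fact supplies the telescoping details the paper leaves implicit --- the key identity $\sqrt{\rho_{k+1}\rho_k}=\sqrt{\rho_{k+1}}\,W_k^*\sqrt{\rho_k}$ pinned down via Lemma~\ref{lem1} --- and differs only in the cosmetic choice of left versus right polar form.
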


\begin{proof}
Consider the polar decomposition of the product of two matrices
\beq
\sqrt{\rho_i} \sqrt{\rho_j} = \abs{ \sqrt{\rho_i} \sqrt{\rho_j}} V_{i,j} = \sqrt{\rho_i^{1/2} \rho_j \rho_i^{1/2}} V_{i,j}.
\eeq
The Hermitian conjugate $V_{i,j}^\dagger$ of $V_{i,j}$ can be written as
\beq
V_{i,j}^\dagger = \frac{1}{\sqrt{\rho_j}}\, \frac{1}{\sqrt{\rho_i}}\,  \sqrt{\rho_i^{1/2} \rho_j \rho_i^{1/2}}.
\label{polarmixed}
\eeq
The unitaries $U_i$ in a correlation matrix~(\ref{cor}) are chosen according to
\beq
 U^\dagger_i = V_{i-1,i}^\dagger\, U^\dagger_{i-1}
\label{recurrence}
\eeq
where $V_{i-1,i}^\dagger$ is the unitary matrix from the polar decomposition and the first unitary $U_1$ can be chosen arbitrarily. The recurrence relation allows us to obtain formula~(\ref{rec2}).
\end{proof}

The matrix $\sigma$ has a layered structure, as shown below for $K=4$. 
\begin{equation}
\sigma=
\begin{bmatrix}
    p_1 & 0 & 0 &0 \\
    0 & p_2 &0 & 0 \\
    0 & 0 & p_3 &0\\
   0 &0&0 & p_4  
\end{bmatrix}+
\begin{bmatrix}
    0 &  f_{12} & 0 &0 \\
      f_{21} &0 &  f_{23} & 0 \\
    0 &  f_{32} &0 &  f_{34}\\
   0 &0& f_{43} & 0  
\end{bmatrix}+\begin{bmatrix}
    0 &0 & f^{(2)}_{13} & f^{(3)}_{14} \\
  0 &0 & 0 & f^{(2)}_{24} \\
    f^{(2)}_{31} &0 & 0 & \s0\\
    f^{(3)}_{41} & f^{(2)}_{42} &0 & 0  
\end{bmatrix}
\label{mx}
\end{equation}
On the main diagonal, the weights $\{p_j\}$ in the ensemble appear. Next, we find on the closest upper parallel to the main diagonal weighted root fidelities $\{f_{i,i+1}=\sqrt{p_i p_{i+1}} \tr \sqrt{\rho_i \rho_{i+1}}\}$. As we move on to more outward parallels $\{f^{(2)}_{i,i+2}\}$ and $\{f^{(3)}_{i,i+3}\}$ the matrix entries become more complicated and involve 3, 4, \dots  states. The matrix entries below the diagonal are the complex conjugates of these of the upper diagonal part. Note that the entropy of the matrix $\sigma$ depends on the ordering of the diagonal elements.  Proposition~\ref{ppp2} holds for any such ordering, and one can thus take the smallest entropy among all possible permutations. 

Using~(\ref{normeq}) it comes as no surprise that Proposition~\ref{ppp2} can be extended to non faithful states, properly defining the $\sigma_{ij}$. For the extreme case of pure states, $\rho_i = |\varphi_i\>\<\varphi_i|$ one puts
\beq
\sigma_{ij} = \sqrt{p_ip_j} \bra{\varphi_i} \varphi_j\>\, \r e^{-i\alpha_{i,j+1}}\, \r e^{-i\alpha_{i+1,i+2}} \cdots \r e^{-i\alpha_{j-1,j}},
\eeq
with 
\beq
\r e^{-i\alpha_{i,j}} := \frac{\bra{\varphi_i} \varphi_j \>}{\abs{\!\bra{\varphi_i} \varphi_j \>}}.
\eeq

% \section{Conclusions}

\bigskip
\noindent
\textbf{Acknowledgements}

It is a pleasure to thank G. Mitchison and R. Jozsa for helpful discussions. This work was partially supported by the grant number N202 090239 of Polish Ministry of Science and Higher Education, by the Belgian Interuniversity Attraction Poles Programme P6/02, and by the FWO Vlaanderen project G040710N.

\bigskip
\noindent
\textbf{\Large Appendix}

\medskip
\begin{proof}[Proof of Lemma~\ref{lem2}]
As the supremum is always non-negative, we may impose the additional restriction $h \in \spa(\{f,g\})$. If $h$ does not belong to this subspace, decompose it into $h_1 \oplus h_2$ with $h_1 \in \spa(\{f,g\})$. Next replace $h$ by $h_1/\norm{h_1}$. Evaluating the functional with this new $h$ will return a value at least as large as that with the original $h$.

So let $h = \alpha f + \beta g$ with $\alpha, \beta \in \Cx$ such that
\beq
\norm h^2 = \abs\alpha^2 + \abs\beta^2 + 2 \Re\g e(\overline\alpha \beta \< f \,,\, g \>) = 1.
\label{nor}
\eeq
Using this normalization condition we compute
\begin{equation*}
\abs{\< f \,,\, h \>}^2 = \abs{\alpha + \beta \< f \,,\, g \>}^2 = 1 - \abs\beta^2 \bigl( 1 - \abs{\< f \,,\, g \>}^2 \bigr)
\end{equation*}
and
\begin{equation*}
\abs{\< g \,,\, h \>}^2 = \abs{\overline\alpha \< f \,,\, g \> + \overline\beta}^2 = 1 - \abs\alpha^2 \bigl( 1 - \abs{\< f \,,\, g \>}^2\bigr).
\end{equation*}
Hence, the functional of $h$ we have to maximize does not depend on the phase of $\overline\alpha \beta \< f \,,\, g \>$. The normalization condition~(\ref{nor}) can be satisfied if and only if
\begin{equation*}
\Bigl| \abs\alpha^2 + \abs\beta^2 -1 \Bigr| \le 2 \abs\alpha \abs\beta \abs{\< f \,,\, g \>}.
\end{equation*}
Putting $\lambda := \abs\alpha$, $\mu := \abs\beta$ and $t := \abs{\< f \,,\, g \>}$ we have to compute
\begin{equation*}
I := \sup_{\lambda, \mu} \Bigl( 2 - (\lambda^2 + \mu^2) (1 - t^2) - 2a \sqrt{1 - \lambda^2(1 - t^2)} \sqrt{1 - \mu^2 (1 - t^2)} \Bigr)
\end{equation*}
subject to the constraints
\begin{equation*}
0 \le \lambda,\enskip 0 \le \mu,\enskip \text{and } \abs{\lambda^2 + \mu^2 - 1} \le 2 \lambda \mu t
\end{equation*}
with $t$ satisfying $0 \le t \le a \le 1$. The supremum is attained choosing $\lambda = \mu$, with $\lambda$ such that
\begin{equation*}
\frac{1}{2(1 + t)} \le \lambda^2 \le \frac{1}{2(1 - t)}.
\end{equation*}
We obtain for $I$ the value
\begin{equation*}
I = 2 (1 - a) \Bigl(1 -\frac{1 - t^2}{2(1 + t)} \Bigr) = (1 - a) (1 + t) \le (1 - a^2).
\end{equation*}
\end{proof}

\end{document}